\newtheorem{theorem}{Theorem} 
\newtheorem{definition}[theorem]{Definition} 
\newtheorem{example}[theorem]{Example} 
\newtheorem{lemma}[theorem]{Lemma} 
\newtheorem{corollary}[theorem]{Corollary}
\let\citet=\cite  
\newcounter{ranking structure implies termination}
\newcounter{underlying domain well founded}
\newcounter{lem:decdiff}
\newcounter{ranking thm}
\newcounter{NP thm}
\newcounter{thm:main}
\newcounter{anchor correct}
\begin{document}
\title[SAT-Based Termination Using 
    Monotonicity Constraints]{
    SAT-Based Termination Analysis Using 
    Monotonicity Constraints over the Integers%
\thanks{Supported by the G.I.F.\ grant 966-116.6.}
}

\author[M.~Codish et al.]{MICHAEL CODISH, IGOR GONOPOLSKIY\\
Department of Computer Science, Ben-Gurion University, Israel
\and AMIR M.\ BEN-AMRAM\\
School of Computer Science, Tel-Aviv Academic College, Israel
\thanks{Part of this author's work was carried out while visiting DIKU, the
University of Copenhagen.}
\and
CARSTEN FUHS, J\"URGEN GIESL\\
LuFG Informatik 2, RWTH Aachen University, Germany
}


\newcommand{\PP}{\mathcal{P}}
\newcommand{\GG}{\mathcal{G}}
\newcommand{\HH}{\mathcal{H}}
\newcommand{\sset}[2]{\left\{~#1  \left|
      \begin{array}{l}#2\end{array}
    \right.     \right\}}

\newcommand{\set}[1]{\left\{ \: #1 \:
                     \right\}}
\newcommand\tuple[1]{\langle #1 \rangle}
\newenvironment{SProg}
     {\begin{small}\begin{tt}\begin{tabular}[c]{l}}%
     {\end{tabular}\end{tt}\end{small}}
\newcommand{\qin}{\hspace*{0.15in}}
\newcommand{\scgarrow}{\mathchoice{\mbox{\rm ~:--~}}{\mbox{\rm ~:--~}}{\mbox{\rm :--}\;}{\mbox{\rm :--}\;}    }

\newcommand{\julia}{\textsf{Julia}}
\newcommand{\val}{\mbox{\it Val}}
\newcommand{\deq}{{\downarrow}{\raisebox{1ex}{\rmfamily\hspace{-1.3ex}\scriptsize=}}}
\newcommand{\deqsm}{\downarrow{}^{\hspace{-1.1ex}=}} 
\newcommand{\bdfn}{\begin{definition}}
\newcommand{\edfn}{\end{definition}}
\newcommand{\eqdef}{\triangleq}

\newcommand{\nats}{\mathbb N}
\newcommand{\extInts}{{\overline{\mathbb Z}}}

\newcommand{\aprove}{\textsf{AProVE}}
\newcommand{\costa}{\textsf{COSTA}}
\newcommand{\FFF}{\mathsf{F}}
\newcommand{\Ftrue}{\mathsf{true}}
\newcommand{\pol}[1]{||#1||}
\newcommand{\Fp}{\mathsf{p}}
\newcommand{\Fq}{\mathsf{q}}

\newcommand{\ints}{\mathbb Z}
\newcommand{\rationals}{\mathbb Q}

\maketitle  
\label{firstpage}

\begin{abstract}
  We describe an algorithm for proving termination of programs
  abstracted to systems of monotonicity constraints in the integer
  domain.  Monotonicity constraints are a non-trivial extension of the
  well-known size-change termination method.  While deciding
  termination for systems of monotonicity constraints is PSPACE
  complete, we focus on a well-defined and significant subset, which
  we call MCNP, designed to be amenable to a SAT-based solution.  Our
  technique is based on the search for a special type of ranking
  function defined in terms of bounded differences between multisets
  of integer values.  We describe the application of our approach as
  the back-end for the termination analysis of Java Bytecode (JBC). At
  the front-end, systems of monotonicity constraints are obtained by
  abstracting information, using two different termination analyzers:
  \aprove\ and \costa.  Preliminary results reveal that our approach
  provides a good trade-off between precision and cost of analysis.
\end{abstract}

\begin{keywords}
 termination analysis, monotonicity constraints, SAT encoding.
\end{keywords}

\section{Introduction} 

Proving termination is a fundamental problem in verification.  The
challenge of termination analysis is to design a program abstraction
that captures the properties needed to prove termination as often as
possible, while providing a decidable sufficient criterion for
termination.  
Typically, such abstractions represent a program as a finite set of
abstract transition rules which are descriptions of program steps, where
the notion of step can be tuned to different needs. The abstraction
considered in this paper is based on monotonicity-constraint systems
(MCSs).

The MCS abstraction is an extension of the SCT (size-change
termination~\cite{leejonesbenamram01}) abstraction, which has been
studied extensively during the last decade ({see
  \url{http://www2.mta.ac.il/~amirben/sct.html} for a summary and
references}).
In the SCT abstraction, an abstract transition rule is specified by a set
of inequalities that show how the sizes of program data in the target
state are bounded by those in the source state. Size is measured by a
well-founded base order. These inequalities are often represented by a
\emph{size-change graph}.

The size-change technique was conceived to deal with well-founded
domains, where infinite descent is impossible.  Termination is deduced
by proving that any (hypothetical) infinite run would decrease some
value monotonically and endlessly, so that well-foundedness would be
contradicted.

Extending this approach, a \emph{monotonicity constraint} (MC) allows
for any conjunction of order relations (strict and non-strict
inequalities) involving any pair of variables from the source
and target states. So in contrast to SCT, one may also have
relations between two variables in the target state or two variables
in the source state.
Thus, MCSs are more expressive, and \cite{Codish-et-al:05} 
observe that earlier analyzers based on monotonicity
constraints \cite{LS:97,CT:99,LSS:04} apply a termination test which
is sound and complete for SCT, but incomplete for monotonicity
constraints, even if one does  
 \begin{wrapfigure}[9]{r}{0.35\textwidth}
 \vspace*{-.4cm}
  \hspace*{-.2cm}\begin{minipage}[t]{\linewidth}
       \scriptsize
       \lstset{language=Java,basicstyle=\tt}
       \fbox{\parbox{4.35cm}{\vspace*{-.25cm}
         \lstinputlisting{Average.listing}\vspace*{-.28cm}}}
   \end{minipage}
 \end{wrapfigure}
 not change the underlying model, namely that ``data'' are from an
 unspecified well-founded domain.
They also point out that monotonicity constraints
can imply termination under a different assumption---that the data are
integers. Not being well-founded, integer data cannot be handled by SCT.
As an example, consider the Java program on the right which computes
the average of \texttt{x} and \texttt{y}. The loops in this program
can be abstracted to the following monotonicity-constraint transition
rules:
\[
  \begin{array}{llcll}
  (1) \qquad & a(x,y) & \scgarrow & x>y,    x>x', y'>y, x'\geq y'; &  a(x',y')\\
  (2) & a(x,y) & \scgarrow & y \geq x,x'>x, y>y', y'\geq x'; &  a(x',y')
\end{array}
\]

To prove termination of the Java program it is sufficient to focus on
the corresponding abstraction.  Note that termination of this program
cannot be proved using SCT, not only because SCT disallows
constraints between source variables (such as $x{>}y$), but also because
it computes with integers rather than  natural numbers.

To see how the transition constraints imply termination, observe
that if (1) is repeatedly taken, then the value of $y$ grows;
constraint $x>y$ (with the fact that $x$ descends) implies
that this cannot go on forever.
In (2), the situation is reversed: $y$ descends and is lower-bounded
by $x$. In addition, constraint $y'\ge x'$ of rule (2) implies that,
once this rule is taken, there can be no more applications of
(1). Therefore any (hypothetical) infinite computation would
eventually enter a loop of (1)s or a loop of (2)s;
possibilities which we have just ruled out.  In this paper, we show
how to obtain such termination proofs automatically using SAT solving.

Although MCS and SCT are abstractions where termination is
decidable, they have a drawback: the decision problems are PSPACE
complete and a certificate for termination under these abstractions
can be of prohibitive complexity (not ``polynomially computable''
\cite{Ben-Amram:ranking}).  Typical implementations based on the SCT 
abstraction apply a closure operation on transition rules
which is exponential both in time and in space.
\cite{TACAS08} addressed this problem for SCT,
identifying an NP complete subclass of SCT, called SCNP, which yields
polynomial-size certificates. Moreover, \cite{TACAS08} automated SCNP
using a SAT solver.  Experiments indicated that, in practice, this
method had good performance and power when compared to a complete SCT
decision procedure, and had the additional merit of producing
certificates.

In this paper we tackle the similar problem to prove termination of
monotonicity-constraint systems in the integer domain.  As noted
above, the integer setting is more complicated than the well-founded
setting.  Termination is often proved by looking at
\emph{differences} of certain program values (which should be
decreasing and lower-bounded).  One could simulate such
reasoning in SCT by creating fresh variables to track the non-negative
differences of pairs of original variables. However this loses precision 
and may square the number of variables, which
is an exponent in the complexity of most SCT algorithms.  Instead, we
use an idea from \cite{TACAS08} which consists of mapping program
states into multisets of argument values.  The adaption of this method
to integer data is non-trivial.
Our new solution uses the following ideas: (1) We associate two sets
with each program point and define how to ``subtract'' them so that
the difference can be used for ranking (generalizing the difference of
two integers). This avoids the quadratic growth in the exponent of the
complexity, since we are only working with the original variables and
relations, and is also more expressive.  (2) We introduce a concept of
``ranking functions'' which is less strict than typically used but
still suffices for termination. It allows the co-domain
of the function to be a non-well-founded set that has a well-founded
subset.  This gives an additional edge over the na\"ive reduction to
SCT, which can only make use of differences which are definitely
non-negative.

After presenting preliminaries in Sect.\ \ref{Monotonicity-Constraint
  Systems and Their Termination}, Sect.\ \ref{Ranking Functions for
  Monotonicity Constraint Systems} introduces 
\emph{ranking structures}, which are termination witnesses.  In Sect.\
\ref{sec-sat} we show that such a witness can be verified in
polynomial time, hence the resulting subclass of terminating MCSs lies
in NP. Consequently, we call it MCNP. In Sect.\
\ref{subsec:algorithm} we devise an algorithm that uses a SAT solver
as a back-end to solve the resulting search problems.
Sect.\ \ref{sec:exp} describes an empirical evaluation using a
prototypical implementation as the back-end for termination analysis
of Java Bytecode (JBC).  Results indicate a good trade-off between
precision and cost of analysis.  All proofs and further details of the
evaluation can be found in the appendices.

\emph{Related work.} 
Termination analysis is a vast field and we focus here on the most
closely related work.  On termination analyzers for JBC, we mention
\costa\ \cite{FMOODS08}, \julia~\cite{Julia-TOPLAS}, and \aprove\
\cite{JBC-Correctness,RTA10}.  Both \costa\ and \julia\ abstract
programs into a CLP form, as in this work; but use a richer constraint
language that makes termination of the abstract program
undecidable. On extending SCT to the integer domain: \cite{Avery:06}
uses constraints of the form ${x{>}y'},{x{\ge} y'},{x{<}y'},{x{\le}
  y'}$ along with polyhedral state invariants (similar constraints as
those used by \costa\ and \julia) to find lower-bounded combinations
of the variables.  \cite{MV-cav06} uses SCT constraints on
pseudo-variables that represent ``measures'' invented by the
system. This allows it to handle integers by taking, for example, the
differences of two variables as a measure.  \cite{DLSS:2001,SdS:2004}
prove termination of logic programs that depend on numerical
constraints by inferring ``level mappings'' based on constraints
selected from the source program; so, a constraint like $x>y$ can
trigger the use of $x-y$ as a level mapping.  There are numerous
applications of SAT for deciding termination problems for all kinds of
programs (e.g., one of the first such papers is \cite{RTA2006}).

\section{Monotonicity-Constraint Systems and Their Termination}
\label{Monotonicity-Constraint Systems and Their Termination}

Our method is programming-language independent. It works on an
abstraction of the program provided by a front-end.  An
abstract program is a transition system with states expressed in terms
of a finite number of variables (\emph{argument positions}).

\begin{definition}[constraint transition system] \label{def:abstract program} 
  A \emph{constraint transition system} is an abstract program,
  represented by a directed multigraph called a \emph{control-flow
    graph} (CFG). The vertices are called \emph{program points} and
  they are associated with fixed numbers (arity) of \emph{argument
    positions}. We write $p/n$ to specify the arity of vertex $p$.
  A \emph{program state} is an association of a value from the
  \emph{value domain} to each argument position of a program point
  $p$, denoted $p(x_1,\dots,x_n)$ and abbreviated $p(\bar x)$.  The
  set of all states is denoted $\mathit St$.
  The arcs of the CFG are associated with \emph{transition rules},
  specifying relations on program states, which we write as $p(\bar x)
  \scgarrow \pi;\, q(\bar y)$. The \emph{transition predicate} $\pi$
  is a formula in the \emph{constraint language} of the abstraction.
\end{definition}

Note that a state corresponds to a ground atom: argument positions are
associated with specific values. In a transition rule, positions are
associated with variables that can only be constrained through
$\pi$. Thus in the notation $p(\bar x)$, $\bar x$ may represent ground
values or variables, according to context.
The constraint language in our work is that of \emph{monotonicity
  constraints}.

\begin{definition}[monotonicity constraint] \label{def-scg} A
  \emph{monotonicity constraint} (MC) $\pi$ on $V = \bar x\cup\bar y$
  is a conjunction of constraints $x \rhd y$ where $x,y\in V$, and
  ${\rhd}\in\{>,\ge\}$.  We write $\pi \models x\rhd y$ whenever
  $x\rhd y$ is a consequence of $\pi$ (in the theory of total
  orders). This consequence relation is easily computed, e.g., by a
  graph algorithm.
  A transition rule $p(\bar x) \scgarrow \pi; \, q(\bar y)$, where
  $\pi$ is a MC, is also known as a \emph{monotonicity-constraint
    transition rule}.
  An \emph{integer monotonicity-constraint transition system} (MCS)%
  \footnote{In this work only the integer domain is of interest, hence
    ``integer'' will be omitted.}  is a constraint transition system
  where the value domain is $\ints$ and transition predicates are
  monotonicity constraints.
 \end{definition}

 It is useful to represent a MC as a directed graph (often denoted by
 the letter $g$), with vertices $\bar x\cup\bar y$, and two types of
 edges $(x,y)$: weak and strict. If $\pi \models x > y$ then there is
 a strict edge from $x$ to $y$ and if $\pi\models x \geq y$ (but not
 $x>y$) then the edge is weak.
 Note that there are two kinds of graphs, those representing
 transition rules and the CFG.  We often identify an abstract program
 with its set $\GG$ of transition rules, the CFG being implicitly
 specified.

\begin{definition}[run, termination]
  Let $\GG$ be a transition system.  A \emph{run} of $\GG$ is a
  sequence $p_0(\bar x_0) \stackrel{\pi_0}{\to} p_1(\bar x_1)
  \stackrel{\pi_1}{\to} p_2(\bar x_2) \dots$ of states labeled by
  constraints such that each labeled pair of states, $p_i(\bar x_i)
  \stackrel{\pi_i}{\to} p_{i+1}(\bar x_{i+1})$, corresponds to a
  transition rule $p_i(\bar x) \scgarrow \pi_i; \, p_{i+1}(\bar y)$ from
  $\GG$ (identical except that variables $\bar x$ and $\bar y$ are
  replaced by values $\bar x_i$ and $\bar x_{i+1}$) and such that
  $\pi_i$ is satisfied.
  A transition system \emph{terminates} if it has no infinite run.
  
\end{definition}

\vspace{-3mm}
\begin{example} \label{ex:mainexample}
 This example presents a MCS in
  textual form as well as graphical form.  This system is terminating,
  and in the following sections we shall illustrate how our method
  proves it.  In the graphs, solid arrows stand for strict
  inequalities and dotted arrows stand for weak inequalities.

\medskip\noindent
\mbox{\small 
$\begin{array}{ll@{\;}l@{\;\,}l}
g_1= & p(x_1,x_2,x_3) &\scgarrow~ y_1>x_1, y_2\geq x_1, 
       x_2 \geq y_2, x_2\geq y_3, x_2\geq x_1; &p(y_1,y_2,y_3) \\
g_2= & p(x_1,x_2,x_3) &\scgarrow~ y_1\geq x_1, y_1>x_2, y_2>x_2, x_3\geq y_2,
        x_3\geq y_3, x_3 > x_2; & p(y_1, y_2, y_3) \\
g_3= & p(x_1,x_2,x_3) &\scgarrow~
     y_1>x_1,\, x_2\geq y_2;
 &q(y_1,y_2)\\
g_4= & q(x_1,x_2) &\scgarrow~
 y_1\geq x_1,\, x_2\geq y_2, x_2\geq y_3,\,x_2\geq x_1; & p(y_1,y_2,y_3) 
\end{array}$}

\smallskip

\noindent
\fbox{\hspace*{-.1cm}\xymatrix@R=15pt{p:\\ p: }
  \xymatrix@C=9pt@R=15pt{
     x_1 & x_2\ar@{.>}[l]\ar@{.>}[dr]\ar@{.>}[d] & x_3\\
     y_1\ar[u] & y_2\ar@{.>}[ul] & y_3
}}\hspace*{.1cm}
\fbox{\hspace*{-.1cm}\xymatrix@R=15pt{ p:\\ p: }
\xymatrix@C=9pt@R=15pt{
     x_1 & x_2 & x_3\ar[l]\ar@{.>}[dl]\ar@{.>}[d]  \\
     y_1\ar@{.>}[u]\ar[ur] & y_2\ar[u] & y_3
}}\hspace*{.1cm}
\fbox{\hspace*{-.1cm}\xymatrix@R=15pt{ p:\\ q: }
  \xymatrix@C=8pt@R=15pt{
     x_1 & x_2\ar@{.>}[d] & x_3 \\
     y_1\ar[u] & y_2
}}\hspace*{.1cm}
\fbox{\hspace*{-.1cm}\xymatrix@R=15pt{ q:\\ p: }
\xymatrix@C=8pt@R=15pt{
     x_1 & x_2 \ar@{.>}[l]\ar@{.>}[dr]\ar@{.>}[d]  \\
     y_1\ar@{.>}[u] & y_2 & y_3
}}
\end{example}

\section{Ranking Structures for Monotonicity-Constraint Systems}
\label{Ranking Functions for Monotonicity Constraint Systems}

This section describes \emph{ranking structures}, a concept that we
introduce for proving termination of MCSs.  Sect.\ \ref{Some
  preliminaries regarding ranking functions} presents the necessary
notions in general form.  Then, Sect.\ \ref{Multiset-based ranking
  functions} specializes them to the form we use for MCNP.

\vspace{-5mm}
\subsection{Ranking structures}
\label{Some preliminaries regarding ranking functions}

Recall that $\succsim$ is a \emph{quasi-order} if it is transitive and
reflexive; its \emph{strict part} $x \succ y$ is the relation $(x
\succsim y) \land (y \not\succsim x)$; the quasi-order is
\emph{well-founded} if there is no infinite chain with $\succ$.  A set
is well-founded if it has a tacitly-understood well-founded order.

A \emph{ranking function} maps program states into a well-founded set,
such that every transition decreases the function's value.  As shown
in~\cite{BA:mcInts}, for every terminating MCS there exists a
corresponding ranking function. However, these are of exponential size
in the worst case. Since our aim is NP complexity, we cannot use that
construction, but instead restrict ourselves to polynomially sized
termination witnesses.  These witnesses, called \emph{ranking
  structures}, are more flexible than ranking functions, and suffice
for most practical termination proofs.

\begin{definition}[anchor, intermittent ranking function] 
  \label{def:anchor} Let $\GG$ be a MCS with state space $St$.  Let
  $(\mathcal D,\succsim)$ be a quasi-order and ${\mathcal D}_+$ a
  well-founded subset of $\cal D$.  Consider a function
  $\Phi:St\to{\mathcal D}$.  We say that $g \in \GG$ is a $\Phi$-\emph{anchor}
  for $\GG$ (or that $g$ is \emph{anchored} by $\Phi$ for $\GG$) if
  for every run $p_0(\bar x_0)\stackrel{\pi_0}{\to} p_1(\bar
  x_1)\stackrel{\pi_1}{\to} \ldots \stackrel{\pi_{k-1}}{\to} p_k(\bar
  x_k) \stackrel{\pi_{k}}{\to} p_{k+1}(\bar x_{k+1}) $ where both
  $p_0(\bar x_0)\stackrel{\pi_0}{\to} p_1(\bar x_1)$ and $p_k(\bar
  x_k) \stackrel{\pi_{k}}{\to} p_{k+1}(\bar x_{k+1})$ correspond to
  the transition rule $g$, we have $\Phi(p_{i}(\bar
  x_{i}))\succsim\Phi(p_{i+1}(\bar x_{i+1}))$ for all $0\leq i\leq k$,
  where at least one of these inequalities is strict; and
  $\Phi(p_{i}(\bar x_{i}))\in {\mathcal D}_+$ for some $0 \leq i \leq
  k$.
A function $\Phi$ which satisfies the above conditions is called 
an \emph{intermittent ranking function} (IRF).\footnotemark
\footnotetext{The term ``intermittent ranking function'' is
              inspired by \cite{MannaWaldinger78}.}
\end{definition}

\vspace{-2mm}
\begin{example} \label{ex:ranking} Consider the transition rules from
  Ex.\ \ref{ex:mainexample}. Let ${\GG} = \{g_1, g_2\}$ and let
  $\Phi_1(p(\bar x)) = max(x_2,x_3)-x_1$.  In any run built with $g_1$
  and $g_2$, the value of $\Phi_1$ is non-negative at least in every
  state followed by a transition by $g_1$. Moreover, a transition by $g_1$
  decreases the value strictly and a transition by $g_2$
  decreases it weakly. Hence, $g_1$ is anchored by $\Phi_1$ for $\GG$
  (in Sect.~\ref{subsec:multisets}, we come back to this example and
  show how $\Phi_1$ fits the patterns of termination proofs that our
  method is designed to discover).
\end{example}

\vspace{-2mm}
\begin{definition}[ranking structure] \label{def:structure} Consider
  $\GG$ and $\mathcal D$ as in Def.\
  \ref{def:anchor}.
  Let $\Phi_1, \dots, \Phi_m:St\to{\mathcal D}$.  Let $\GG_1$
  consist of all transition rules $g \in \GG$ where $\Phi_1$ anchors
  $g$ for $\GG$.  For $2 \leq i \leq m$, let $\GG_i$ consist of all
  transition rules $g \in \GG \setminus (\GG_1 \cup \ldots \cup
  \GG_{i-1})$ where $\Phi_i$ anchors $g$ in $\GG \setminus (\GG_1 \cup
  \ldots \cup \GG_{i-1})$. We say that $ \tuple{\Phi_1,  \dots,
    \Phi_m }$ is a \emph{ranking structure} for $\GG$ if $\GG_1 \cup
  \ldots \cup \GG_m = \GG$.
\end{definition}
Note that by the above definition, for every $g \in \GG$ there is a
(unique) $\GG_i$ with $g \in \GG_i$. We denote this index $i$ as
$i(g)$ (i.e., $g \in \GG_{i(g)}$ for all $g \in \GG$).

\vspace{-1mm}
\begin{example}
  For the program $\{g_1,g_2\}$ of Ex.~\ref{ex:mainexample}, a ranking
  structure is $\tuple{\Phi_1,\Phi_2}$ with $\Phi_1$ as in
  Ex.~\ref{ex:ranking} and $\Phi_2(p(\bar x)) = x_3-x_2$. Here, we
  have $i(g_1)=1$ and $i(g_2)=2$. Later, in Ex.~\ref{ex:multiset-maps}
  and \ref{MCNP example} we will extend the ranking structure to the
  whole program $\{g_1,g_2,g_3,g_4\}$.
\end{example}

The concept of ranking structures generalizes that of lexicographic
global ranking functions used, e.g., in~\cite{TACAS08,ADFG:2010}.  A
lexicographic ranking function is a ranking structure, however, the
converse is not always true, since the function $\Phi$ does not
necessarily decrease on a transition rule which it anchors, and
because $\Phi$ may assume values out of ${\mathcal D}_+$ in certain
states.

\setcounter{ranking structure implies termination}{\value{theorem}}
\addtocounter{ranking structure implies termination}{1}

\begin{theorem} 
If there is a ranking structure for $\GG$, then $\GG$ terminates.
\end{theorem}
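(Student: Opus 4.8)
The plan is to argue by contradiction: suppose $\langle \Phi_1,\dots,\Phi_m\rangle$ is a ranking structure for $\GG$ but $\GG$ has an infinite run $\rho = p_0(\bar x_0)\stackrel{\pi_0}{\to} p_1(\bar x_1)\stackrel{\pi_1}{\to}\cdots$. Each transition step in $\rho$ is an instance of some $g\in\GG$, so by the definition of ranking structure each step carries a well-defined index $i(g)\in\{1,\dots,m\}$. Let $i^* $ be the \emph{smallest} index that occurs infinitely often among the steps of $\rho$. First I would drop a finite prefix of $\rho$: since every index below $i^*$ occurs only finitely often, there is a position $N$ after which only steps with index $\ge i^*$ occur. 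Write $\rho'$ for the infinite suffix of $\rho$ starting at position $N$; it is still a run, and by construction every transition rule used in $\rho'$ lies in $\GG\setminus(\GG_1\cup\cdots\cup\GG_{i^*-1})$.

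Next I would focus on $\Phi_{i^*}$ evaluated along $\rho'$. Let $g^*$ be a rule with $i(g^*)=i^*$ that is used infinitely often in $\rho'$; by definition of $\GG_{i^*}$, the rule $g^*$ is anchored by $\Phi_{i^*}$ in $\GG\setminus(\GG_1\cup\cdots\cup\GG_{i^*-1})$, which is exactly the set of rules $\rho'$ uses. The key step is to show $\Phi_{i^*}$ is weakly decreasing along \emph{every} step of $\rho'$. For this I need the anchor property to deliver a weak decrease on rules other than $g^*$ as well. This follows by applying Def.~\ref{def:anchor} to the finite segments of $\rho'$ delimited by consecutive occurrences of $g^*$: between any two successive $g^*$-steps (and these exist because $g^*$ occurs infinitely often), the anchor condition yields $\Phi_{i^*}(p_i(\bar x_i))\succsim\Phi_{i^*}(p_{i+1}(\bar x_{i+1}))$ for all $i$ in that segment, with at least one strict inequality and at least one state landing in $\mathcal{D}_+$. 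Since $g^*$ occurs infinitely often, these segments cover all of $\rho'$, so $\Phi_{i^*}$ is weakly decreasing everywhere on $\rho'$, strictly decreasing once per segment, and visits $\mathcal{D}_+$ once per segment.

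Finally I would derive the contradiction from well-foundedness of $\mathcal{D}_+$. Pick any position $j$ in $\rho'$ where $\Phi_{i^*}(p_j(\bar x_j))\in\mathcal{D}_+$ (such $j$ exist, one per segment). From $j$ onward, $\Phi_{i^*}$ only weakly decreases in $\succsim$, so $\Phi_{i^*}(p_k(\bar x_k))\preceq \Phi_{i^*}(p_j(\bar x_j))$ for all $k\ge j$; I would need the small observation that an element of $\mathcal{D}_+$ has all its $\succsim$-predecessors again in $\mathcal{D}_+$ (closure downward under $\succsim$ within the well-founded part — or, more carefully, I should take $\mathcal{D}_+$ to be downward closed, or restrict attention to the downward closure, which is still well-founded as a sub-quasi-order). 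Hence the tail of the sequence $\Phi_{i^*}(p_j(\bar x_j)),\Phi_{i^*}(p_{j+1}(\bar x_{j+1})),\dots$ lives entirely in $\mathcal{D}_+$, is weakly decreasing, and — because each segment contributes a strict decrease and there are infinitely many segments — contains infinitely many strict decreases. That is an infinite descending chain in the well-founded set $(\mathcal{D}_+,\succ)$, a contradiction. Therefore $\GG$ has no infinite run, i.e., $\GG$ terminates.

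The main obstacle I anticipate is the bookkeeping around $\mathcal{D}_+$: the anchor definition only guarantees \emph{some} state in each covered segment lies in $\mathcal{D}_+$, not that values stay there, so I must be careful that weak $\succsim$-steps from within $\mathcal{D}_+$ do not escape it. The clean fix is to either assume $\mathcal{D}_+$ is downward closed under $\succsim$ (a harmless normalization, since one can always replace $\mathcal{D}_+$ by its downward closure, which remains well-founded), or to phrase the descent argument directly on the images of the anchored rule's states. The rest — extracting the infinite suffix, covering it by segments between $g^*$-occurrences, and assembling the infinite descending chain — is routine.
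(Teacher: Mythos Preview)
Your overall strategy---pick the minimal index $i^*$ occurring infinitely often, pass to a tail using only rules of index $\ge i^*$, and apply the anchor property on segments between consecutive occurrences of some $g^*$ with $i(g^*)=i^*$---is exactly what the paper does. The genuine gap is in your final step. Your primary argument assumes that once $\Phi_{i^*}$ lands in $\mathcal{D}_+$ it stays there, and your proposed repair (replace $\mathcal{D}_+$ by its downward closure under $\succsim$, which you claim ``remains well-founded'') is false. Abstractly: take $\mathcal{D}=\ints$ with the usual order and $\mathcal{D}_+=\{0\}$; the downward closure is all non-positive integers, which is not well-founded. In the paper's own setting the same failure occurs: for the $ms$ order one has $\{5\}\succ^{ms}\{3,-1\}$, so $(\wp_n(\ints),\succsim^{ms})_+$ is not downward closed, and its downward closure is all of $\wp_n(\ints)$, where $\{0\}\succ^{ms}\{-1\}\succ^{ms}\{-2\}\succ^{ms}\cdots$ is an infinite descent. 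So the ``harmless normalization'' is not harmless.

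The paper never asks the values to remain in $\mathcal{D}_+$. Instead it selects positions $i_1<i_2<\cdots$ (at least one per segment) at which $\Phi_{i^*}$ lies in $\mathcal{D}_+$---this is exactly what the anchor definition guarantees---and argues that $\Phi_{i^*}(p_{i_k}(\bar x_{i_k}))\succ\Phi_{i^*}(p_{i_{k+1}}(\bar x_{i_{k+1}}))$, using transitivity of the weak decreases along the run together with the strict step each segment contributes. That gives an infinite $\succ$-chain lying entirely in $\mathcal{D}_+$, contradicting well-foundedness directly. Your alternative fix (``phrase the descent argument directly on the images of the anchored rule's states'') is pointing at precisely this; you should carry it out and drop the downward-closure route.
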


\begin{definition}\label{def:irredundant}
  A ranking structure $\tuple{ \Phi_1, \Phi_2, \dots, \Phi_m }$ for
  $\GG$ is \emph{irredundant} if for all $j\le m$, there is a
  transition $g\in\GG$ such that $i(g) = j$.
\end{definition}

\noindent
It follows easily from the definitions that if there is a ranking
structure for $\GG$, there is an irredundant one, of length at most
$|{\cal G}|$.

\subsection{Multiset Orders and Level Mappings}
\label{Multiset-based ranking functions}
\label{subsec:multisets}

The building blocks  for our construction are four quasi-orders on
multisets of integers, and a notion of \emph{level mappings}, which
map program states into pairs of multisets, whose \emph{difference}
(not set-theoretic difference; 
see Def.~\ref{def:ms-diff} below) will
be used to rank the states.\footnote{ A reader familiar with previous
  works using this term should note that here, a level mapping is not
  in itself some kind of ranking function.}  The difference will be
itself a multiset, and we now elaborate on the relations that we use
to order such multisets.

\vspace{-1mm}
\begin{definition}[multiset types]
\label{def:ms-exts}
Let $\wp_n(\ints)$ denote the set of multisets of  integers of at most
$n$ elements, where $n$ is fixed by context.\footnote{For
monotonicity-constraint systems, $n$ is the maximum arity of program points.}
The $\mu$-ordered multiset type, for $\mu\in \set{max, min, ms, dms}$,
is the quasi-ordered set $(\wp_n(\ints),\succsim^\mu)$ where:
\begin{enumerate}
\item \emph{(max order)} $S\succsim^{max} T$ holds iff 
  $max(S) \ge max(T)$, or $T$ is empty;
  $S\succ^{max} T$ holds iff 
  $max(S) > max(T)$, or $T$ is empty while $S$ is not.
\item \emph{(min order)} $S\succsim^{min} T$ holds iff 
  $min(S) \ge min(T)$, or $S$ is empty;
  $S\succ^{min} T$ holds iff
  $min(S) > min(T)$, or $S$ is empty while $T$ is not.
\item \emph{(multiset order~\cite{DeMa:79})} $S\succ^{ms}T$
  holds iff
$T$ is obtained by replacing a non-empty 
  $U\subseteq S$ by a (possibly empty) multiset $V$ such that
  $U\succ^{max} V$;
 the weak relation
  $S\succsim^{ms}T$ holds iff $S\succ^{ms}T$ or $S = T$.
\item \emph{(dual multiset order~\cite{BL:2006})} $S\succ^{dms}T$
holds iff
$T$ is obtained by replacing a sub-multiset
  $U\subseteq S$ by a non-empty multiset $V$ with $U\succ^{min} V$;
  the weak relation
  $S\succsim^{dms}T$ holds 
  iff $S\succ^{dms}T$ or $S = T$.
\end{enumerate}
\end{definition}

\vspace{-3mm}
\begin{example}   
  For $S=\{10,8,5\},\ T=\{9,5\}$:\hspace{0.75em} $S \succ^{max}
  T$,\hspace{0.75em} $T \succsim^{min} S$,\hspace{0.75em} $S
  \succ^{ms}T$, and $T \succ^{dms} S$.
\end{example}

\vspace{-3mm}
\begin{definition}[well-founded subset of multiset types]
  For $\mu\in \set{max, min, ms, dms}$, we define
  $(\wp_n(\ints),\succsim^\mu)_+$ as follows: For \textit{min}
  (respectively \textit{max}) order, the subset consists of the
  multisets whose minimum (resp.~maximum) is non-negative.  For
  \textit{ms} and \textit{dms} orders, the subset consists of the
  multisets all of whose elements are non-negative. 
\end{definition}

\setcounter{underlying domain well founded}{\value{theorem}}
\addtocounter{underlying domain well founded}{1}

\vspace{-3mm}
\begin{lemma}
  For all $\mu\in \{max, min, ms, dms\}$,
  $(\wp_n(\ints),\succsim^\mu)$ is a total quasi-order, with
  $\succ^\mu$ its strict part; and
  \mbox{$(\wp_n(\ints),\succsim^\mu)_+$} is well-founded. 
\end{lemma}

For MCs over the integers, it is necessary to consider differences: in
the simplest case, we have a ``low variable'' $x$ that is
non-descending and a ``high variable'' $y$ that is non-ascending, so
$y-x$ is non-ascending (and will decrease if $x$ or $y$ changes). If
we also have a constraint like $y\ge x$, to bound the difference from
below, we can use this
 \begin{wrapfigure}[4]{r}{19mm}\vspace{-6mm}
 \begin{center}\scriptsize  
 $\fbox{
   \xymatrix@C=15pt@R=15pt{
      x   & y\ar@{.>}[l]\ar@{.>}[d] \\
      x'\ar@{.>}[u] & y'
 }}$
   \end{center}
 \end{wrapfigure}
  for ranking a loop
(we refer to this situation as ``the $\mathrm\Pi$''---due
to the diagram on the right).
In the more general case, we consider sets of variables. We will
search for a similar $\mathrm \Pi$ situation involving a ``low set''
and a ``high set''.  We next define how to form a difference of two
sets so that one can follow the same strategy of ``diminishing
difference''.

\begin{definition}[multiset difference]
\label{def:ms-diff}
Let $L,H$ be non-empty multisets with types $\mu_L, \mu_H$ respectively.  Their
difference $H-L$ is defined in the following way, depending on the
types 
(there are 6 cases):
 \begin{enumerate}
 \item For $\mu_L \in\{max,min\}$, $H-L = \{ h- \mu_L(L)~|~h\in H \}$
   and has the type of $H$. \\ (Here, $\mu_L(L)$ signifies $min(L)$ or
   $max(L)$ depending on the value of $\mu_L$).
 \item For $\mu_L\in\{ms,dms\}$ and $\mu_H\in\{min,max\}$,   
   $H-L = \{ \mu_H(H) - \ell~|~\ell\in L\}$ and has type $\overline\mu_L$
   (where $\overline{ms}=dms$ and $\overline{dms}=ms$).
\end{enumerate}
For $L$ and $H$ such that $H-L$ is defined, we say that the types of
$L$ and $H$ are \emph{compatible}. We write $H\supsetpluseq L$ if the
difference belongs to the well-founded subset.
\end{definition}

Note that $\supsetpluseq$ relates multisets of possibly different
types and is not an order relation. 
Termination proofs do not require to define the difference of
multisets with types in $\{ms,dms\}$.  To see why, observe that in
``the $\mathrm \Pi$'', only one multiset must change strictly, and the
non-strict relations $\succsim^{ms}$, $\succsim^{dms}$ are contained
in $\succsim^{max}$, $\succsim^{min}$, respectively.
Note also that $H\supsetpluseq L$ is equivalent, in all relevant
cases, to $\mu_1(H) \ge \mu_2(L)$ with $\mu_1, \mu_2 \in \{min,max\}$.
The intuition into why multiset difference is defined as above is
rooted in the following lemma.

\setcounter{lem:decdiff}{\value{theorem}}
\addtocounter{lem:decdiff}{1}

\begin{lemma} 
  Let $L, H$ be two multisets of compatible types $\mu_L, \mu_H$, and
  let $\mu_D$ be the type of $H-L$.  Let $L', H'$ be of the same types
  as $L, H$ respectively.  Then

\vspace*{-.2cm}

{\small\begin{eqnarray*}
H\succsim^{\mu_H} H' \land L \precsim^{\mu_L} L'   
             &\Longrightarrow& H-L \succsim^{\mu_D} H'-L'  ;\\
H\succ^{\mu_H} H' \land L \precsim^{\mu_L} L'
             & \Longrightarrow&  H-L \succ^{\mu_D} H'-L'  ;\\
H\succsim^{\mu_H} H' \land L \prec^{\mu_L} L'
             & \Longrightarrow&  H-L \succ^{\mu_D} H'-L'  \,.
\end{eqnarray*}}
\end{lemma}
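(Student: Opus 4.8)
The plan is to prove the three implications by case analysis on the type $\mu_L$ of the ``low'' multiset, following the two-case structure of Definition~\ref{def:ms-diff}, and in each case reducing the claim about $H-L \succsim^{\mu_D} H'-L'$ to an elementary inequality between minima or maxima of the multisets involved. The key observation that makes this tractable is the remark immediately preceding the lemma: in all relevant cases, the relations $\succsim^\mu$ and $\succ^\mu$ on the multisets $H-L$, $H'-L'$ are governed purely by $\mu_1(H-L)$ versus $\mu_1(H'-L')$ for the appropriate $\mu_1 \in \{min,max\}$ (and the emptiness corner cases cannot arise here since $H,L,H',H',L'$ are all non-empty by hypothesis). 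So the whole proof reduces to tracking how $min$ or $max$ of a difference multiset is computed from $min$/$max$ of $H$ (or $H'$) and $min$/$max$ of $L$ (or $L'$).

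First I would record the shift identities. In case (1) of Definition~\ref{def:ms-diff}, where $\mu_L \in \{min,max\}$, we have $H-L = \{h - \mu_L(L) \mid h \in H\}$, so for $\nu \in \{min,max\}$, $\nu(H-L) = \nu(H) - \mu_L(L)$ — subtracting a constant commutes with taking $min$ or $max$. Here $\mu_D = \mu_H$, so I need $\mu_H(H-L) \,\{\ge,>\}\, \mu_H(H'-L')$, i.e. $\mu_H(H) - \mu_L(L) \,\{\ge,>\}\, \mu_H(H') - \mu_L(L')$. The hypotheses give $\mu_H(H) \ge \mu_H(H')$ (strictly in the second implication) and $\mu_L(L) \le \mu_L(L')$ (strictly in the third); adding these termwise with the sign flip on the $L$-side yields exactly the desired inequality, strict whenever either input inequality is strict. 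This disposes of case (1) for all three implications at once.

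Next I would handle case (2), where $\mu_L \in \{ms,dms\}$ and $\mu_H \in \{min,max\}$, so $H-L = \{\mu_H(H) - \ell \mid \ell \in L\}$ and $\mu_D = \overline{\mu_L}$. The subtlety here is that $L \precsim^{\mu_L} L'$ is a multiset-order ($\succsim^{ms}$ or $\succsim^{dms}$) hypothesis, not a raw inequality on a single extremum, so I must extract from it the scalar fact I actually need. The governing extremum of $H-L$ under type $\overline{\mu_L}$: if $\mu_L = ms$ then $\mu_D = dms$ and the relevant quantity is $min(H-L) = \mu_H(H) - max(L)$; if $\mu_L = dms$ then $\mu_D = ms$ and the relevant quantity is $max(H-L) = \mu_H(H) - min(L)$. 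In both sub-cases the pattern is $\mu_D\text{-extremum}(H-L) = \mu_H(H) - \mu_L^\sharp(L)$ where $\mu_L^\sharp = max$ when $\mu_L = ms$ and $\mu_L^\sharp = min$ when $\mu_L = dms$. So I need the implication $L \precsim^{ms} L' \Rightarrow max(L) \le max(L')$ (respectively $L \precsim^{dms} L' \Rightarrow min(L) \le min(L')$), with strictness transferred in the strict case; these follow directly from the definitions of $\succ^{ms}$/$\succ^{dms}$ (replacing $U$ by $V$ with $U \succ^{max} V$ can only lower the maximum, and symmetrically), together with the monotonicity of the whole $H$-side exactly as in case~(1). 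Combining the $H$-side and $L$-side inequalities termwise gives the claim.

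I expect the main obstacle to be the bookkeeping in case (2): one must correctly pair each multiset type $\mu_L \in \{ms,dms\}$ with the extremum of $L$ that controls it (note the ``crossover'' — the $ms$-order on $L$ controls $max(L)$, which then becomes the $min$-relevant quantity of the $dms$-typed difference), and verify that the non-strict multiset orders indeed refine the appropriate scalar comparison, including checking that no empty-multiset boundary case of Definition~\ref{def:ms-exts} is triggered (ruled out by the non-emptiness hypotheses). Once the correspondence $\mu_L \leftrightarrow \mu_L^\sharp \leftrightarrow \overline{\mu_L}$ is laid out in a small table, each of the six (type)$\times$(three implications) combinations collapses to the same one-line arithmetic step, so the proof is short modulo this classification.
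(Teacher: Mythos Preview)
Your reduction to scalar inequalities has a genuine gap. You claim that $\succsim^{\mu_D}$ on the difference multisets is ``governed purely by $\mu_1(H-L)$ versus $\mu_1(H'-L')$ for the appropriate $\mu_1\in\{\min,\max\}$'', but this is false whenever $\mu_D\in\{ms,dms\}$, and that is the typical situation: in case~(1) of Def.~\ref{def:ms-diff} we have $\mu_D=\mu_H$, which may well be $ms$ or $dms$; in case~(2) we \emph{always} have $\mu_D=\overline{\mu_L}\in\{ms,dms\}$. The multiset orders $\succsim^{ms}$ and $\succsim^{dms}$ are not determined by a single extremum (e.g.\ $\{1,2\}\not\succsim^{dms}\{1,3\}$ although the minima coincide), so the expression ``$\mu_H(H-L)\,\{\ge,>\}\,\mu_H(H'-L')$'' is not even well-typed there. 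The remark you invoke just before the lemma concerns $H\supsetpluseq L$ (membership of $H-L$ in the well-founded subset), not the order between two differences; you have conflated the two.

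The paper does not collapse to scalars. For case~(1) it uses two facts valid for \emph{all} four $\mu$: shifting every element of both multisets by the same constant preserves $\succsim^\mu$ and $\succ^\mu$, and a one-to-one elementwise domination $s_i\ge t_i$ implies $S\succsim^\mu T$ (strict if all $s_i>t_i$). These yield $H-L\succsim^{\mu_H} H-L'$ and $H-L'\succsim^{\mu_H} H'-L'$, and transitivity finishes. For case~(2) the paper introduces negation $-S=\{-s:s\in S\}$ and shows it swaps $\max\leftrightarrow\min$ and $ms\leftrightarrow dms$ while reversing the order; this converts an instance with $\mu_L\in\{ms,dms\}$, $\mu_H\in\{\min,\max\}$ into an instance of case~(1) (with the roles of high and low exchanged), after checking $(-L)-(-H)=H-L$. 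Your extraction of $\max(L)\le\max(L')$ from $L\precsim^{ms}L'$ is correct but insufficient: it discards exactly the multiset structure needed to conclude a $\succsim^{dms}$ relation on the differences.
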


\noindent
\emph{Level mappings} are functions that facilitate the construction
of ranking structures.
Three types of level mappings are defined in~\cite{TACAS08}:
\emph{numeric}, \emph{plain}, and \emph{tagged}.  In this paper we
focus on ``plain'' and ``tagged'' level mappings and we adapt them for
multisets of integers. Numeric level mappings have become redundant in
this paper due to the passage from ranking functions to ranking
structures.  We first introduce the extension for plain level
mappings.
\begin{definition}[bi-multiset level mapping, or ``level mapping'' for short]
\label{def:bi-multiset level mapping}
Let $\GG$ be a MCS.  A \emph{bi-multiset level mapping},
$f_{\mu_L,\mu_H}$ maps each program state $p(\bar x)$ to a pair of
(possibly intersecting) multisets $p_f^{low}(\bar x) =
\set{u_1,\ldots,u_l}\subseteq\bar x$ and $p_f^{high}(\bar x) =
\set{v_1,\ldots,v_k}\subseteq\bar x$ with types indicated respectively
by $\mu_L,\mu_H\in\set{max,min,ms,dms}$. Only compatible pairs
$\mu_L,\mu_H$ are admitted.
  The selection of argument positions only depends on the program
  point $p$.
\end{definition}

\vspace{-1mm}
\begin{example} 
\label{ex:multiset-maps}
The following are the level mappings  
used (in Ex.\ \ref{MCNP example})
to prove termination of the program of Ex.\ \ref{ex:mainexample}.
Here, each program point $p$ is mapped to $\tuple{p_{f}^{low}(\bar x),
  p_{f}^{high}(\bar x)}$.

\begin{minipage}[t]{0.5\linewidth}
  $\begin{array}{l}
  f^1_{min,max}(p(\bar x)) = \tuple{\set{x_1},\set{x_2,x_3}} \\
  f^1_{min,max}(q(\bar x)) = \tuple{\set{x_1},\set{x_2}} \\
  \end{array}$
\end{minipage}
\begin{minipage}[t]{0.5\linewidth}
$\begin{array}{l}
f^2_{min,max}(p(\bar x)) = \tuple{\set{x_2},\set{x_3}}\\
f^2_{min,max}(q(\bar x)) = \tuple{\set{},\set{}} 
\end{array}$
\end{minipage}
\end{example}

We now turn to tagged level mappings. Assume the context of
Def.~\ref{def:bi-multiset level mapping} and let $M$ denote the sum of
the arities of all program points.
A \emph{tagged} bi-multiset level mapping is just like a bi-multiset
level mapping, except that set elements are pairs of
the form $(x,t)$ where $x$ is from $\bar x$ and $t<M$ is a natural
constant, called a tag.
We view such a pair as representing the
integer value $Mx+t$ (recall that $x$ is an integer).
This transforms
tagged multisets into multisets of 
integers, so
Defs.~\ref{def:ms-diff}, \ref{def:bi-multiset level mapping}, and the
consequent definitions and results can be used without  change.

Tags ``prioritize'' certain argument positions and can usefully turn
weak inequalities into strict ones.  For example, consider a
transition rule $ p(\bar x) \scgarrow x_1 > y_1, x_1\ge y_2,\dots;
p(\bar y)$. The tagged set $\{(x_1,1), (x_2,0)\}$ is strictly greater
(in $ms$ order as well as in $max$ order) than $\{(y_1,1), (y_2,0)\}$
(because $\pi\models (x_1,1)>(y_2,0)$).  The plain sets $\{x_1,x_2\}$
and $\{y_1,y_2\}$ do not satisfy these relations. Thus tagging may
increase the chance of finding a termination proof.  We do not have
any fixed rule for tagging; our SAT-based procedure will find a useful
tagging if one exists.  
In the remainder we write ``level mapping'' to indicate a, possibly
tagged, bi-multiset level mapping.

Level mappings are applied in termination proofs to
express the diminishing difference of their low and high sets. To be
useful, we also need to express a constraint relating the high and low
sets, providing, figuratively, the horizontal bar of ``the
$\mathrm\Pi$''.  A transition rule that has such a constraint is called
\emph{bounded}.

\begin{definition}[bounded]
\label{def:ms-bounded}
Let $\GG$ be a MCS, $f$ a level mapping,\footnote{We
  sometimes write $f$ (for short) instead of $f_{\mu_L,\mu_H}$.} and
$g \in \GG$.  A transition rule $g=p(\bar x) \scgarrow \pi; q(\bar y)$
in $\GG$ is called \emph{bounded w.r.t.\ $f$} if $\pi\models
p_f^{high} \supsetpluseq p_f^{low}$.
\end{definition}

\vspace{-1mm}
\begin{definition}[orienting transition rules]
\label{orienting}
  Let $f$ be a level mapping.
  (1) $f$ \emph{orients} transition rule $g=p(\bar x) \scgarrow \pi;
  q(\bar y)\,$ if $\pi\models p_f^{high}(\bar x) \succsim
  q_f^{high}(\bar y)$ and $\pi\models p_f^{low}(\bar x) \precsim
  q_f^{low}(\bar y)$;
  (2) $f$ orients $g$ \emph{strictly} if, in addition, $\pi\models
  p_f^{high}(\bar x) \succ q_f^{high}(\bar y)$ or $\pi\models
  p_f^{low}(\bar x) \prec q_f^{low}(\bar y)$.
\end{definition}

\vspace{-1mm}
\begin{example} \label{ex-SCNP}
We refer to Ex.\ \ref{ex:mainexample} and the level mapping $f^1_{min,max}$
from Ex.\ \ref{ex:multiset-maps}.
Function $f^1_{min,max}$ orients all transition rules, where $g_1$
and $g_3$ are oriented strictly; $g_1$ and $g_4$ are 
bounded w.r.t.\ $f^1_{min,max}$
(the reader may be able to verify this by
observing the constraints, however later we explain how our algorithm obtains
this information).
\end{example}

\vspace{-1mm}
\begin{corollary}[of Def.~\ref{orienting} and
  Lemma~\arabic{lem:decdiff}] \label{cor:orientation} Let $f$ be a level
  mapping and define $\Phi_f(p(\bar x)) = p_f^{high}(\bar x) -
  p_f^{low}(\bar x)$.  If $f$ orients $g=p(\bar x) \scgarrow \pi;
  q(\bar y)\,$, then $\pi\models \Phi_f(p(\bar x))\succsim
  \Phi_f(q(\bar y))$; and if $f$ orients $g$ strictly, then
  $\pi\models \Phi_f(p(\bar x))\succ \Phi_f(q(\bar y))$.
\end{corollary}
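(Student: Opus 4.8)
The plan is to read the corollary off Lemma~\arabic{lem:decdiff}, which was designed for exactly this use, after matching up notation. First I would unfold the definitions. By Definition~\ref{def:bi-multiset level mapping}, the level mapping $f=f_{\mu_L,\mu_H}$ fixes a single compatible pair of types $\mu_L,\mu_H$, so at \emph{every} program point the low set has type $\mu_L$ and the high set has type $\mu_H$. Abbreviating $H=p_f^{high}(\bar x)$, $L=p_f^{low}(\bar x)$, $H'=q_f^{high}(\bar y)$, $L'=q_f^{low}(\bar y)$, we thus have: $H,H'$ of type $\mu_H$; $L,L'$ of type $\mu_L$; and, by Definition~\ref{def:ms-diff}, $\Phi_f(p(\bar x))=H-L$ and $\Phi_f(q(\bar y))=H'-L'$, both of the induced type $\mu_D$ (so the unsuperscripted $\succsim,\succ$ in the statement mean $\succsim^{\mu_D},\succ^{\mu_D}$). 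This is precisely the configuration of multisets assumed in Lemma~\arabic{lem:decdiff}.

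Next I would discharge the semantic quantifier hidden in ``$\pi\models$''. Fix an arbitrary integer assignment to the variables of $\bar x\cup\bar y$ that satisfies $\pi$; under it $H,L,H',L'$ become concrete multisets of integers. If $f$ orients $g$, then Definition~\ref{orienting}(1) gives, in this assignment, $H\succsim^{\mu_H}H'$ and $L\precsim^{\mu_L}L'$, so the first implication of Lemma~\arabic{lem:decdiff} yields $H-L\succsim^{\mu_D}H'-L'$, i.e.\ $\Phi_f(p(\bar x))\succsim\Phi_f(q(\bar y))$. If moreover $f$ orients $g$ strictly, then by Definition~\ref{orienting}(2) \emph{either} $\pi\models H\succ^{\mu_H}H'$ \emph{or} $\pi\models L\prec^{\mu_L}L'$; in the first case the second implication of Lemma~\arabic{lem:decdiff} applies (its other hypothesis $L\precsim^{\mu_L}L'$ comes from part (1)), in the second case the third implication applies (using $H\succsim^{\mu_H}H'$ from part (1)), and either way we get $H-L\succ^{\mu_D}H'-L'$. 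Since the assignment was an arbitrary model of $\pi$, this establishes $\pi\models\Phi_f(p(\bar x))\succsim\Phi_f(q(\bar y))$, respectively $\pi\models\Phi_f(p(\bar x))\succ\Phi_f(q(\bar y))$.

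I expect the only genuine friction to be the degenerate case where a program point is mapped to $\tuple{\set{},\set{}}$ (which does occur, cf.\ Ex.~\ref{ex:multiset-maps}), since Definition~\ref{def:ms-diff} and Lemma~\arabic{lem:decdiff} are phrased for non-empty multisets. I would dispose of this by the natural convention that $\Phi_f$ takes the least value of the relevant quasi-order whenever both component sets are empty: then any transition rule incident to such a point is trivially oriented (and, where needed, trivially bounded), so the corollary holds there without invoking the lemma, and all remaining rules fall under the argument above. With this convention in place no calculation remains — the corollary is just Lemma~\arabic{lem:decdiff} together with the definition of ``orients''.
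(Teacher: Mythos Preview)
Your proposal is correct and matches the paper's treatment: the paper gives no separate proof for this corollary, labeling it explicitly as a corollary ``of Def.~\ref{orienting} and Lemma~\arabic{lem:decdiff}'' and leaving the reader to unfold exactly the chain of implications you wrote out. Your handling of the $\pi\models$ quantifier and the case split on which side is strict is the intended argument.

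One small remark on your empty-set discussion: the convention ``$\Phi_f$ takes the least value'' does not quite work uniformly, since for some of the orders (e.g.\ $min$) the empty set sits at the top rather than the bottom of $\succsim^\mu$, and $(\wp_n(\ints),\succsim^\mu)$ need not have a least element at all. The paper sidesteps this: within a strongly connected component the constraint $\psi^\GG_{ne}$ forces both sets to be non-empty, and the extension $f(p(\bar x))=\tuple{\emptyset,\emptyset}$ is only used for program points outside the current SCC, whose incident transitions are never on a cycle and are therefore handled as trivial anchors rather than via Corollary~\ref{cor:orientation}. So the edge case you flagged is real but is resolved structurally by the algorithm rather than by an order-theoretic convention.
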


The next theorem combines orientation and bounding to show how a level mapping
induces anchors.  Note that
we refer to cycles in the CFG also as ``cycles in $\GG$'', as the CFG
is implicit in $\GG$.

\setcounter{ranking thm}{\value{theorem}}
\addtocounter{ranking thm}{1}

\begin{theorem}
  Let $\GG$ be a MCS and $f$ a level mapping.  Let $g=p(\bar x)
  \scgarrow \pi; q(\bar y)\,$ be such that every cycle $\cal C$ including
  $g$
  satisfies these conditions: (1) all transitions in $\cal C$ are oriented
  by $f$, and at least one of them strictly; (2) at least one
  transition in $\cal C$ is bounded w.r.t.~$f$.
  Then $g$ is a $\Phi_f$-anchor for $\GG$, where $\Phi_f(p(\bar x)) =
  p_f^{high}(\bar x) - p_f^{low}(\bar x)$.
\end{theorem}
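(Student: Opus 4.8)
The plan is to check directly the two conditions of Def.~\ref{def:anchor} defining a $\Phi_f$-anchor, taking $\mathcal D=(\wp_n(\ints),\succsim^{\mu_D})$ and $\mathcal D_+=(\wp_n(\ints),\succsim^{\mu_D})_+$, where $\mu_D$ is the (fixed, since $f$ fixes $\mu_L$ and $\mu_H$) type of the difference $p_f^{high}-p_f^{low}$; note $\mathcal D_+$ is well-founded by Lemma~\arabic{underlying domain well founded}. So fix a run
\[
 p_0(\bar x_0)\stackrel{\pi_0}{\to} p_1(\bar x_1)\stackrel{\pi_1}{\to}\cdots\stackrel{\pi_{k-1}}{\to} p_k(\bar x_k)\stackrel{\pi_{k}}{\to} p_{k+1}(\bar x_{k+1})
\]
whose transitions at indices $0$ and $k$ both use $g=p(\bar x)\scgarrow\pi; q(\bar y)$ (two occurrences of $g$, so $k\ge 1$); then $p_0=p_k=p$ and $p_1=p_{k+1}=q$. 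The first and only non-routine step is to observe that the transitions at indices $0,1,\dots,k-1$ trace a \emph{nonempty closed walk} $\mathcal C$ in the CFG from $p$ back to $p$ that begins with the edge $g$, so $\mathcal C$ is a cycle through $g$. (If $g$ is on no cycle there is no such run and the anchor property is vacuous, so we may assume $\mathcal C$ exists.) Applying the hypothesis to $\mathcal C$: by (1), every rule occurring in $\mathcal C$ --- in particular $g$ itself --- is oriented by $f$, and some rule used at an index $j\in\{0,\dots,k-1\}$ is oriented strictly; by (2), some rule used at an index $j'\in\{0,\dots,k-1\}$ is bounded w.r.t.\ $f$.

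It then remains to push these rule-level statements down to the concrete states along the run. For every $0\le i\le k-1$, the transition at index $i$ uses an $f$-oriented rule whose constraint is satisfied by $\bar x_i,\bar x_{i+1}$, so Corollary~\ref{cor:orientation} (which in turn invokes Lemma~\arabic{lem:decdiff}) gives $\Phi_f(p_i(\bar x_i))\succsim^{\mu_D}\Phi_f(p_{i+1}(\bar x_{i+1}))$, and strictly when $i=j$. The transition at index $k$ also uses $g$, already seen to be $f$-oriented, so it likewise yields $\Phi_f(p_k(\bar x_k))\succsim^{\mu_D}\Phi_f(p_{k+1}(\bar x_{k+1}))$. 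Hence $\Phi_f(p_i(\bar x_i))\succsim^{\mu_D}\Phi_f(p_{i+1}(\bar x_{i+1}))$ for all $0\le i\le k$, with the $i=j$ inequality strict --- the first condition of an anchor. For the second, the transition at index $j'$ uses a rule $p_{j'}(\bar x)\scgarrow\pi_{j'}; p_{j'+1}(\bar y)$ that is bounded, i.e.\ $\pi_{j'}\models (p_{j'})_f^{high}(\bar x)\supsetpluseq (p_{j'})_f^{low}(\bar x)$ by Def.~\ref{def:ms-bounded}; since $\pi_{j'}$ holds on the run, $(p_{j'})_f^{high}(\bar x_{j'})\supsetpluseq (p_{j'})_f^{low}(\bar x_{j'})$, which by the meaning of $\supsetpluseq$ says exactly $\Phi_f(p_{j'}(\bar x_{j'}))\in\mathcal D_+$; and $0\le j'\le k$. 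Thus $g$ is a $\Phi_f$-anchor for $\GG$.

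I expect the only real subtlety to lie in that first step: correctly recognising the sub-run between the two occurrences of $g$ as a ``cycle through $g$'' to which the hypothesis applies. In particular ``cycle'' must be read as ``closed walk'' here --- it would \emph{not} suffice to assume conditions (1)--(2) only for simple cycles through $g$, since a closed walk through $g$ can traverse further simple cycles that avoid $g$, about whose transitions the hypothesis would then say nothing (the small case $k=1$, where $\mathcal C$ is a single self-loop edge, is already covered by ``nonempty closed walk''). Everything after that is the mechanical instantiation of Corollary~\ref{cor:orientation} and Def.~\ref{def:ms-bounded} at the states of the run, together with the trivial check that all chosen indices lie in $\{0,\dots,k\}$; the only loose end worth noting is the convention for $\Phi_f$ when a level mapping assigns an empty low or high set to some program point, which does not interact with the structure of the proof.
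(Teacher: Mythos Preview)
Your proof is correct and follows essentially the same route as the paper's: take a run bracketed by two uses of $g$, observe that the intermediate transitions form a cycle through $g$ in the CFG, then apply hypothesis~(1) via Corollary~\ref{cor:orientation} for the (strict) decrease and hypothesis~(2) via Def.~\ref{def:ms-bounded} for membership in~$\mathcal D_+$. Your explicit remark that ``cycle'' must be read as ``closed walk'' (and your counterexample sketch showing why simple cycles would not suffice) is a genuine clarification that the paper's terse proof leaves implicit; otherwise the two arguments coincide step for step.
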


\begin{definition}[MCNP anchors and ranking functions] 
  \label{def:ranking lm} Let $\GG$ be a MCS and $f$ a level
  mapping. We say that $g$ is a MCNP-anchor for $\GG$ w.r.t.~$f$ if
  $f$ and $g$ satisfy the conditions of Thm.~\arabic{ranking thm}. 
The
  function $\Phi_f$ is called a \emph{MCNP (intermittent) ranking
    function} (MCNP IRF).
\end{definition}

Note that if $g$ is not included in any cycle, then the definition is
trivially satisfied for any $f$.  Indeed, such transition rules are
removed by our algorithm without searching for level mappings at all.

\begin{example} 
The facts in Ex.~\ref{ex-SCNP} imply that $g_1$, $g_3$, and $g_4$ are
MCNP-anchors w.r.t.~$f^1_{min,max}$.
\end{example}

We remark that numerous termination proving techniques follow the
pattern of, repeatedly, identifying and removing anchors.  However,
typically, the function $\Phi$ used for ranking is required to be
strictly decreasing, and bounded, on the anchor itself, which (at
least implicitly) means that a lexicographic ranking function is being
constructed; see, e.g.,~\cite{CS:02}.  The anchor criterion expressed
in Thm.~\arabic{ranking thm} (inspired by~\cite[Thm.\ 8]{CADE07}) is more
powerful.  We note that the difference is only important with
non-well-founded domains. When the ranking is only done with orders
that are a priori well-founded, as for example in~\cite{JAR06,HM},
considering the strictly-oriented transitions as anchors is
sufficient.  In comparison to~\cite{CADE07}, we note that they do not
use the concept of anchors, and propose an algorithm which can
generate an exponential number of level-mapping-finding subproblems
(whereas ours generates, in the worst case, as many problems as there
are transition rules).

\section{The MCNP Problem}
\label{sec-sat}

In this section, we present necessary and sufficient conditions for
orientability and boundedness. Based on these, we conclude that
proving termination with MCNP IRFs is in NP. This also forms the
basis for our SAT-based algorithm in Sect.\ \ref{subsec:algorithm}.
 
\begin{definition}[MCNP]
  A system of monotonicity constraints is in MCNP if it has a
  ranking structure which is a tuple of MCNP IRFs.
\end{definition}

\noindent
It follows from Thm.\ \arabic{ranking structure implies termination},
that if a MCS is in MCNP, then it terminates.

\begin{example}\label{MCNP example}
  Consider again Ex.\ \ref{ex:mainexample} and the level mappings from
  Ex.\ \ref{ex:multiset-maps}.  Then, $\tuple{\Phi_{f^1},\Phi_{f^2}}$
  is a ranking structure for $\GG$. As already observed, $g_1, g_3$,
  and $g_4$ are MCNP-anchors for $f^1$.  Observe now that $f^2$ is
  both strict and bounded on $g_2$.
\end{example}

Ranking structures are constructed through iterative search for
suitable level mappings which prescribe pairs of (possibly
tagged) multisets of arguments which must satisfy relations of the
form $\succsim^\mu$, $\succ^\mu$, and $\supsetpluseq$.

Let $g=p(\bar x) \scgarrow \pi; q(\bar y)$ and $S$, $T$ be non-empty
sets of (tagged) argument positions of $p$ or of $q$.
We show how to check
for each $\mu \in \set{max,min,ms,dms}$ if $\pi\models S
\succsim^{\mu} T$. Viewing $g$ as a graph (as in Ex.\
\ref{ex:mainexample}), let $g^t$ denote the transpose of $g$ (obtained
by inverting the arcs).
While tagged level mappings can be represented as ``ordinary''
bi-multiset level mappings (as indicated in Sect.\ \ref{Multiset-based
  ranking functions}), for their SAT encoding, it is advantageous to
represent the orders on tagged pairs 
explicitly: 
\begin{equation}\label{eq:tagged1}
  \begin{array}{lcl}
    \pi\models (x,i)>(y,j) &\iff& 
       (\pi\models x> y)\lor ((\pi\models x\ge y) \land i>j) \\ 
\pi\models (x,i) \ge (y,j) &\iff& 
       (\pi\models x > y)\lor ((\pi\models x \ge y) \land i\ge j) 
  \end{array}
\end{equation}

Below, $x,y$ either both represent arguments, or both represent tagged
arguments, with relations $x>y$, $x\ge y$ interpreted accordingly.

\begin{enumerate}
\item \emph{max order: ($S \succsim^{max} T$)} every $y\in T$ must be
  ``covered'' by an $x\in S$ such that $\pi\models x \ge y$.  Strict
  descent requires $S\neq\emptyset$ and $x > y$.  
\item \emph{min order: ($S \succsim^{min} T$)} 
same conditions but on $g^t$ (now $T$ covers $S$).
\item \emph{multiset order: ($S \succsim^{ms} T$)} every $y\in T$ must
  be ``covered'' by an $x\in S$ such that $\pi\models x \ge y$.
  Furthermore each $x\in S$ either covers each related $y$ strictly
  ($x>y$) or covers at most a single $y$.  Descent is strict if there
  is some $x$ that participates in strict relations.
\item \emph{dual multiset order: ($S \succsim^{dms} T$)} 
same conditions but on $g^t$ (now $T$ covers $S$).
\end{enumerate}

\noindent
We also show how to decide if the relation $H\supsetpluseq L$ holds:
For $\mu_L,\mu_H\in\{max,min\}$ and $\mu_L=\mu_H$,  $H\supsetpluseq L$ holds iff
$\mu_H(H) \ge \mu_L(L)$.%
\footnote{Note that checking this amounts to checking for
  $\succsim^\mu$ in the case $\mu_L=\mu_H=\mu$; for the other cases,
  $max(H) \ge min(L)$ holds if there is at least one arc from an $H$
  vertex to an $L$ vertex; $min(H) \ge max(L)$ holds if there is an
  arc from every $H$ vertex to every $L$ vertex.}
For $\mu_L=min$ and $\mu_H\in\{ms,dms\}$,  $H\supsetpluseq L$  holds iff
$H\succsim^{min} L$.
      For $\mu_L\in\{ms,dms\}$ and $\mu_H=max$,  $H\supsetpluseq L$  holds iff
      $H\succsim^{max} L$.
      For $\mu_L=max$ and $\mu_H\in\{ms,dms\}$,  $H\supsetpluseq L$  holds if
      $min(H) \ge max(L)$.
      For $\mu_L\in\{ms,dms\}$ and $\mu_H=min$,  $H\supsetpluseq L$  holds if
      $min(H) \ge max(L)$.

      Since the above conditions allow for verification of a proposed
      MCNP ranking structure in polynomial time, we obtain the
      following theorem.

\setcounter{NP thm}{\value{theorem}}
\addtocounter{NP thm}{1}

\begin{theorem}
MCNP is in NP.
\end{theorem}

\section{A SAT-based MCNP Algorithm}
\label{subsec:algorithm}

Given that MCNP is in NP, we provide a reduction (an encoding) to SAT
which enables us to find termination proofs using an
off-the-shelf SAT solver. We invoke a SAT solver iteratively to generate
level-mappings and construct a  ranking structure
$\tuple{\Phi_1, \Phi_2, \dots, \Phi_m }$.
Our main algorithm is presented in Sect.~\ref{subsec:main-algorithm}.
Sect.~\ref{subsec:scc-algorithm} discusses how to find appropriate
level mappings and Sect.~\ref{A SAT encoding} introduces the SAT
encoding.

\subsection{Main algorithm}
\label{subsec:main-algorithm}

Given a MCS $\GG$, the idea is to iterate as follows: while $\GG$ is
not empty, find a
level mapping $f$ inducing
one or more anchors for $\GG$. Remove the anchors, and repeat.  The
instruction ``find a level mapping'' is performed using a SAT encoding
(for each of the compatible pairs of multiset orders).
To improve performance, the algorithm follows the SCC (strongly
connected components) decomposition of (the CFG of) $\GG$. This leads
to smaller subproblems for the SAT solver and is justified by the
observation that inter-component transitions are trivially anchors
(not included in any cycle). 
In the following let $scc(\GG)$ denote
the set of non-vacant SCCs of $\GG$ (that is, SCCs which are not a vertex
without any arcs).

\newenvironment{SProg2}
     {\begin{small}\begin{tt}\begin{tabular}{ll}}%
     {\end{tabular}\end{tt}\end{small}}

\vspace{-2mm}
\paragraph{Main Algorithm.} ~\\
    \texttt{input:} $\GG$ (a MCS) \\
    \texttt{output:} $\rho=\tuple{f^1,f^2,\dots}$ (tuple of level mappings
             such that $\tuple{\Phi_{f^1},\Phi_{f^2},\dots}$ \\
             \hspace*{1.3cm} is a ranking structure for $\GG$).
             The algorithm aborts if $\GG$ is not in MCNP.
 \begin{enumerate}
 \item $\rho=\tuple{~}$ (empty queue); \quad  
       ${\cal S}=scc(\GG)$ (stack with non-vacant SCCs of $\GG$); 
 \item while (${\cal S}\neq\emptyset$) \vspace{-3mm}
   \begin{itemize}
   \item pop $\cal C$ from ${\cal S}$ (a MCS) and find (using SAT) a level
     mapping \\$f$ to anchor some transition rules in $\cal C$ \qin (if none,
     abort: ${\cal C}\notin$ MCNP)
   \item extend $f$ to program points $p$ not in $\cal C$ by 
     $f(p(\bar x))=\tuple{\emptyset,\emptyset}$
   \item append $f$ to $\rho$ and remove from $\cal C$ the $\Phi_f$-anchors that were found 
   \item push elements of $scc({\cal C})$ to ${\cal S}$
   \end{itemize}\vspace{-3mm}
  \item return $\rho$
 \end{enumerate}
       
\setcounter{thm:main}{\value{theorem}}
\addtocounter{thm:main}{1}

\vspace{-1mm}
\begin{theorem}
The main algorithm succeeds if and only if $\GG$ is in MCNP.
\end{theorem}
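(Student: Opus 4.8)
The plan is to prove the two directions separately, with most of the work going into soundness (success implies membership in MCNP). For the \emph{soundness} direction, I would track what the algorithm accumulates. At each iteration the algorithm pops a component $\cal C$, finds a level mapping $f$ that MCNP-anchors (w.r.t.\ $f$, in $\cal C$) a non-empty set of transition rules of $\cal C$, extends $f$ trivially to other program points, appends $f$ to $\rho$, and pushes the SCCs of the residual $\cal C$ back onto the stack. I would argue by induction on the number of iterations that, letting $f^1,\dots,f^m$ be the level mappings appended in order and $\GG_i$ the anchors removed at step $i$, the tuple $\tuple{\Phi_{f^1},\dots,\Phi_{f^m}}$ is a ranking structure for $\GG$: that is, $\GG_i$ consists exactly of transition rules $g \in \GG\setminus(\GG_1\cup\cdots\cup\GG_{i-1})$ that are $\Phi_{f^i}$-anchored in $\GG\setminus(\GG_1\cup\cdots\cup\GG_{i-1})$, and $\GG_1\cup\cdots\cup\GG_m = \GG$. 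The crucial point — and the main obstacle — is relating ``anchored within the popped component $\cal C$'' to ``anchored within the full residual system $\GG\setminus(\GG_1\cup\cdots\cup\GG_{i-1})$''. Here I would invoke the fact (used to justify the SCC optimization) that every cycle in the residual system lies entirely within a single SCC, so the conditions of Thm.~\arabic{ranking thm} — all cycles through $g$ are oriented, with one strict and one bounded transition — are unaffected by transition rules outside $\cal C$; moreover inter-component transitions lie on no cycle and are therefore anchors for any level mapping, hence get removed ``for free'' as the algorithm never re-pushes them inside a non-vacant SCC. I also need that the residual $\cal C$ after removing anchors is still a MCS on the same value domain, so the machinery of Sect.~\ref{sec-sat} continues to apply, and that appending the trivially-extended $f$ does not disturb earlier $\GG_j$'s (the extension only adds empty low/high sets at foreign program points, which orients nothing new there but also breaks nothing). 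Termination of the algorithm follows because each iteration either removes at least one transition rule or strictly decreases the stack, and $\GG$ is finite; when the stack empties, every transition rule of $\GG$ has been removed, so $\GG_1\cup\cdots\cup\GG_m=\GG$ and we have an MCNP ranking structure, whence $\GG\in$ MCNP by Def.\ of MCNP.

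For the \emph{completeness} direction (if $\GG\in$ MCNP then the algorithm succeeds), suppose $\GG$ has an MCNP ranking structure; by the remark following Def.~\ref{def:irredundant} we may take an irredundant one $\tuple{\Phi_{h^1},\dots,\Phi_{h^m}}$ witnessed by level mappings $h^1,\dots,h^m$. The key claim is an invariant: whenever the algorithm pops a component $\cal C$ (a sub-MCS of the current residual $\GG' = \GG\setminus(\text{removed so far})$), the restriction of $\GG'$ is still in MCNP — in fact a suffix/sub-tuple of $\tuple{\Phi_{h^1},\dots}$, appropriately restricted, still witnesses it. Given that, it suffices to show that whenever a residual system is non-empty and in MCNP, \emph{some} level mapping anchors at least one of its transition rules (so the SAT search in the ``pop'' step cannot return ``none''). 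This follows by taking the first index $j$ such that $\GG'\cap\GG_j^{h}\neq\emptyset$ in the MCNP ranking structure for $\GG'$: the level mapping $h^j$ (restricted to the program points occurring in $\cal C$, extended trivially elsewhere) satisfies the orient/bounded conditions of Thm.~\arabic{ranking thm} for those transition rules \emph{within} $\GG'$, a fortiori within the SCC $\cal C\subseteq\GG'$ (fewer cycles), so the SAT encoding — being, by the conditions of Sect.~\ref{sec-sat}, a faithful characterization of orientability and boundedness — is satisfiable and the solver returns a witnessing $f$ (not necessarily $h^j$, but any solution is equally good, since by soundness whatever it finds is genuinely an anchor). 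Removing those anchors keeps the residual in MCNP (drop the now-empty blocks; the remaining $\Phi$'s still form a valid structure since, again, removing transition rules only removes cycles), re-establishing the invariant. As the algorithm terminates and never aborts, it returns a $\rho$, and by the soundness direction $\tuple{\Phi_{f^1},\dots}$ is a ranking structure for $\GG$.

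The delicate bookkeeping — and where I would spend the most care — is the interaction between the SCC decomposition and the global anchor condition: one must be precise that ``cycle in $\cal C$'' and ``cycle in the residual $\GG'$'' coincide for any $g\in\cal C$ once inter-component transitions have been stripped, and that the trivial extension $f(p(\bar x)) = \tuple{\emptyset,\emptyset}$ interacts correctly with $\supsetpluseq$ and with $\succsim^\mu$ for the empty multiset (here the conventions in Def.~\ref{def:ms-exts} and the surrounding discussion, that only \emph{one} multiset need change strictly and that empty high/low sets give a vacuously satisfied orientation at foreign program points, are exactly what is needed). Everything else is an induction on iteration count plus the already-established Thms.~\arabic{ranking thm} and~\arabic{ranking structure implies termination} and the orientability/boundedness characterizations of Sect.~\ref{sec-sat}.
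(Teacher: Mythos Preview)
Your plan is correct and follows essentially the same route as the paper's own proof: soundness by observing that the returned tuple of $\Phi_{f^i}$'s is a ranking structure (the paper dismisses this in one sentence as ``immediate from the definition \ldots\ provided the correctness of the sub-procedures''), and completeness by, at each popped component $\cal C$, picking $g\in{\cal C}$ with minimal index in the \emph{original} ranking structure so that the corresponding level mapping witnesses an anchor in $\cal C$ --- exactly your ``first index $j$ with $\GG'\cap\GG_j^{h}\neq\emptyset$'' argument. Your write-up is considerably more explicit than the paper's about the SCC bookkeeping (cycles through $g$ in the residual coincide with cycles in its SCC; inter-component transitions lie on no cycle and are anchored vacuously) and about the trivial extension $\tuple{\emptyset,\emptyset}$ at foreign program points, neither of which the paper spells out; this extra care is appropriate and does not change the underlying strategy.
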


\subsection{Finding a level mapping}
\label{subsec:scc-algorithm}

The main step in the algorithm is to find a level mapping which
anchors some transition rules of a strongly-connected MCS.
Let $\GG$ be strongly connected and $f$ a level mapping which orients
all transition rules in $\GG$, strictly orients the transition rules
from a non-empty set $S \subseteq \GG$, and where $B \subseteq \GG$
(non-empty) are bounded. Following Thm.~\arabic{ranking thm}, a
transition rule $g$ is an anchor if every cycle in $\GG$ containing
$g$ has an element from $S$ and an element from $B$. We need to check
all cycles in $\GG$ (possibly exponentially many). We describe a way
of doing so by numbering nodes which lends itself well to a SAT-based
solution.

\newcommand{\nn}{\mathit{num}}
\begin{definition}[node numbering]\label{def:node numbering}
  A \emph{node numbering} is a function $\nn$ from $n$ program points
  to $\set{1,\dots,n}$.  For $g=p(\bar x) \scgarrow \pi; q(\bar y)$,
  we denote $\Delta\nn(g) = num(q)-num(p)$.  For a set ${\cal
    H}\subseteq \GG$, we say that $\nn$ \emph{agrees with $\cal H$} if
   for all $g \in \GG$:\ \
  $\Delta\nn(g) > 0 \Rightarrow g\in {\cal H}$.
\end{definition}

Now for $g\in\GG$, checking that every cycle of $\GG$ containing $g$
also contains an element of $S$, is reduced to finding a node
numbering $\nn_S$ with ${\Delta\nn_S(g)\ne 0}$ which agrees with
$S$. Then, any cycle containing $g$ must contain also an edge $g'$ with
${\Delta\nn_S(g')>0}$. But this implies that $g'\in S$ because $\nn_S$
agrees with $S$.

\setcounter{anchor correct}{\value{theorem}}
\addtocounter{anchor correct}{1}

\begin{lemma}
  Let $\GG$, $f$, $S$, and $B$ be as above.  Then, $g\in\GG$ is a MCNP-anchor 
  for $\GG$ w.r.t~$f$ if and only if: (1) $g\in S\cap B$; or 
  (2) there are node numberings $\nn_S$ and $\nn_B$ agreeing with $S$
  and $B$ respectively, such that ${\Delta\nn_S(g)\ne 0}$ and
  $\Delta\nn_B(g)\ne 0$.
\end{lemma}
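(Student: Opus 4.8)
The plan is to unfold the definition of an MCNP-anchor and reduce the two ``every cycle'' requirements of Thm.~\arabic{ranking thm} to the existence of node numberings. Since by hypothesis $f$ orients \emph{every} transition rule of $\GG$, ``$g$ is an MCNP-anchor for $\GG$ w.r.t.\ $f$'' (Def.~\ref{def:ranking lm}) reduces to the purely combinatorial statement that every cycle $\cal C$ of $\GG$ containing $g$ also contains a transition rule of $S$ and a transition rule of $B$. So I would first isolate the following self-contained claim, to be applied once with ${\cal H}=S$ and once with ${\cal H}=B$: for ${\cal H}\subseteq\GG$ and $g\in\GG$, every cycle of $\GG$ through $g$ meets ${\cal H}$ if and only if $g\in{\cal H}$, or there is a node numbering $\nn$ (Def.~\ref{def:node numbering}) agreeing with ${\cal H}$ with $\Delta\nn(g)\ne0$.

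For the ``if'' direction of this claim I would use the telescoping argument already sketched before the lemma. If $g\in{\cal H}$ there is nothing to prove; otherwise, given such an $\nn$ and any cycle $\cal C$ through $g$, the contributions $\Delta\nn$ sum to $0$ around $\cal C$, so since $\Delta\nn(g)\ne0$ some arc $g'$ of $\cal C$ must satisfy $\Delta\nn(g')>0$, whence $g'\in{\cal H}$ because $\nn$ agrees with ${\cal H}$; this works whether ``cycle'' is read as a simple cycle or as a closed walk. For the ``only if'' direction I would argue by explicit construction: assuming every cycle through $g$ meets ${\cal H}$ and $g\notin{\cal H}$, delete the arcs of ${\cal H}$ from $\GG$ (keeping all program points) to obtain a graph $\GG'$ in which $g$ is still an arc. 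Then $g$ lies on no cycle of $\GG'$ --- such a cycle would be a cycle of $\GG$ through $g$ avoiding ${\cal H}$, contrary to assumption --- so the endpoints of $g$ lie in distinct strongly connected components of $\GG'$. Numbering the components of $\GG'$ along a reverse topological order of its SCC-DAG and assigning to each program point the number of its component yields a node numbering into $\{1,\dots,n\}$ that is non-increasing along every arc outside ${\cal H}$ (hence agrees with ${\cal H}$) and strictly decreasing along $g$ (hence $\Delta\nn(g)\ne0$).

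To conclude, I would apply the claim to $S$ and to $B$ and take the conjunction: ``$g$ is an MCNP-anchor for $\GG$ w.r.t.\ $f$'' becomes ``($g\in S$ or a suitable $\nn_S$ exists) and ($g\in B$ or a suitable $\nn_B$ exists)'', where ``suitable'' means agreeing with $S$ (resp.\ $B$) and non-constant on $g$; a short case split on which disjunct holds on each side then collapses this to alternative (1) (the case $g\in S$ together with $g\in B$) or alternative (2) (the case where both numberings exist). I expect the ``only if'' direction of the self-contained claim to be the only non-routine step: it requires manufacturing a concrete node numbering out of the hypothesis that $g$ lies on no ${\cal H}$-free cycle, and the key observation making this possible is that removing ${\cal H}$ separates the two endpoints of $g$ into distinct strongly connected components, after which a topological numbering of the resulting DAG does the job. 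The reduction through Thm.~\arabic{ranking thm}, the telescoping argument, and the final Boolean bookkeeping are routine.
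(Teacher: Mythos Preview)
Your strategy matches the paper's: the ``if'' direction is the telescoping argument (the $\Delta\nn$-values sum to zero around any cycle), and the ``only if'' direction builds the numbering from a reverse topological order of the SCCs of $\GG\setminus{\cal H}$. The paper does not factor out your general claim; instead it disposes of the self-loop case $p=q$ first and then treats $S$ and $B$ in parallel. That is a difference of presentation only.

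Your final ``routine'' Boolean step, however, is not routine and in fact fails --- and the paper's own proof has the same defect. From your general claim you correctly obtain that $g$ is an anchor iff $(g\in S\ \text{or}\ \nn_S\ \text{exists})\land(g\in B\ \text{or}\ \nn_B\ \text{exists})$, but this does \emph{not} collapse to $(g\in S\cap B)\lor(\nn_S\ \text{and}\ \nn_B\ \text{both exist})$: the cross terms are lost. Concretely, take program points $p\ne q$, arcs $g,m:p\to q$ and $k:q\to p$, with $S=\{k\}$ and $B=\{g\}$. Then $g$ is an MCNP-anchor (the unique cycle through $g$ is $g,k$, which meets both $S$ and $B$), yet $g\notin S\cap B$; and since $m,k\notin B$, any numbering agreeing with $B$ is forced to satisfy $\nn_B(q)\le\nn_B(p)$ and $\nn_B(p)\le\nn_B(q)$, hence $\Delta\nn_B(g)=0$. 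So neither~(1) nor~(2) holds. The paper's sentence ``Either $g$ itself is in $B$, or $g$ connects different SCCs; in either case, $\Delta\nn_B(g)\ne0$'' is precisely where its argument makes the same unwarranted leap: when $g\in B$ the endpoints of $g$ may still lie in the same SCC of $\GG\setminus B$. What your construction (and the paper's) actually delivers is: if $g$ is an anchor and $g\notin S$ then a suitable $\nn_S$ exists, and likewise for $B$; the two mixed cases remain uncovered by the stated disjunction.
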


\vspace{-2mm}
\begin{example} \label{ex:mainalgorithm} We now describe the
  application of the Main Algorithm to Ex.~\ref{ex:mainexample}.
  Initially, there is a single SCC, ${\mathcal C} = \GG$.  Using SAT
  solving (as described in Sect.~\ref{A SAT encoding}) we find that
  level 
mapping $f^1$ of Ex.~\ref{ex:multiset-maps} orients all
  transitions, strictly orients $S = \{g_1,g_3\}$ and is bounded on $B
  = \{g_1,g_4\}$. Hence, by choosing the numbering $\nn_B(p)=2$,
  $\nn_B(q)=1$, $\nn_S(p)=1$, $\nn_S(q)=2$, we obtain that $g_1$,
  $g_3$ and $g_4$ are anchors.  Note that the problem encoded to
  SAT represents the choice of the level mapping and node numbering at
  once.
  Now, $\rho$ is set to $\tuple{f^1}$, and the anchors are removed
  from $\mathcal C$, leaving a SCC consisting of point $p$ and
  transition rule $g_2$. In a second iteration, level mapping $f^2$ of
  Ex.~\ref{ex:multiset-maps} is found and appended to $\rho$.  No SCC
  remains, and the algorithm terminates.

 Note that our algorithm is non-deterministic (due to leaving some
 decisions to the SAT solver).  In this example, the first iteration
 could come up with the numbering $\nn_B(p)=\nn_B(q)=1$, which would
 cause only $g_1$ to be recognized as an anchor. Thus, another
 iteration would be necessary, which would find a numbering according
 to which $g_3$ and $g_4$ are anchors, since this time there is no
 other option.
\end{example}

\subsection{A SAT encoding}\label{A SAT encoding}

Let $\GG$ be a strongly connected MCS (assume the context of the
Main Algorithm of Sect.~\ref{subsec:main-algorithm}).  For a compatible
pair $\mu_L,\mu_H$ we construct a propositional formula
$\Phi^\GG_{\mu_L,\mu_H}$ which is satisfiable iff there exists a level
mapping $f_{\mu_L,\mu_H}$ that anchors some transition rules in $\GG$. 
We focus on tagged level mappings (omitting tags is the same as
assigning them all the same value).

Each program point $p$ and argument position $i$ is associated with an
integer variable $tag_p^i$.
Integer variables are encoded through their bit representation.
In the following, we write, for example, $||n>m||$ to
indicate that the relation $n>m$ on integer variables is encoded to a
propositional formula in CNF.  
Let $g=p(\bar x) \scgarrow \pi ; q(\bar y)$ and consider each $a,b \in
\bar x \cup \bar y$. At the core of the encoding, we use a formula
$\varphi_{rel}^g$ which introduces a propositional variable $e^g_{a>b}$ to
specify a corresponding ``tagged edge'', $e^g_{a>b} \leftrightarrow
\pi \models (a,tag_1) > (b,tag_2)$, as prescribed in Eq.~(\ref{eq:tagged1}).
 Here, $tag_1$ and $tag_2$ are the
integer tags associated with the program points and argument positions
of $a$ and $b$ (in $g$). We proceed likewise for the propositional variable
$e^g_{a\ge b}$. 

   \begin{example}
     Consider $g_3=p(x_1,x_2,x_3) \scgarrow y_1>x_1, x_2\geq y_2;
     q(y_1,y_2)$ from Ex.~\ref{ex:mainexample}.  The formula
     $\varphi_{rel}^{g_3}$ contains (among others) the following
     conjuncts.
     From $(y_1>x_1)$, $(e^{g_3}_{y_1>x_1}\leftrightarrow
     \mathtt{true})$ and $(e^{g_3}_{y_1\ge x_1}\leftrightarrow
     \mathtt{true})$; from $(x_2\geq y_2)$, $(e^{g_3}_{x_2>y_2}
     \leftrightarrow ||tag^2_p>tag^2_q||)$ and $(e^{g_3}_{x_2\geq
       y_2} \leftrightarrow ||tag^2_p\geq tag^2_q||)$.
     Observe also,
     $e^{g_3}_{x_1>y_2}\leftrightarrow \mathtt{false}$ and $e^{g_3}_{x_1\geq
       y_2}\leftrightarrow \mathtt{false}$.
   \end{example}

\noindent
We introduce the following additional propositional variables:

\hspace{-5mm}
\begin{minipage}[t]{0.56\linewidth}
  \begin{itemize}

  \item $weak^g \Leftrightarrow g$ oriented weakly by
    $f_{\mu_L,\mu_H}$
  \item $strict^g \Leftrightarrow g$ oriented strictly by
    $f_{\mu_L,\mu_H}$
  \item $bound^g \Leftrightarrow p_f^{high}(\bar x) \supsetpluseq
    p_f^{low}(\bar x)$
  \item $anchor^g \Leftrightarrow g$ is an anchor w.r.t.~$f$
    in $\GG$
  \end{itemize}
\end{minipage}
\hspace{-7mm}
\begin{minipage}[t]{0.49\linewidth}
\begin{itemize}
\item $weak^g_{low}\Leftrightarrow q_f^{low}(\bar y) \succsim^{\mu_L}
  p_f^{low}(\bar x)$
\item $strict^g_{low}\Leftrightarrow q_f^{low}(\bar y) \succ^{\mu_L}
  p_f^{low}(\bar x)$
\item $weak^g_{high}\hspace{-1mm}\Leftrightarrow\hspace{-1mm}
  p_f^{high}(\bar x) \succsim^{\mu_H} q_f^{high}(\bar y)$
\item $strict^g_{high}\hspace{-1mm}\Leftrightarrow\hspace{-1mm}
  p_f^{high}(\bar x) \succ^{\mu_H} q_f^{high}(\bar y)$
\end{itemize}
\end{minipage}	

\vspace*{.2cm}
\noindent
and, for every program point $r$, two integer variables
$\nn_S^r$ and $\nn_B^r$ to represent the node numberings from Def.\
\ref{def:node numbering}.  

Our encoding takes the following form:
\[\small
  \Phi_{\mu_L,\mu_H}^\GG = \left(\bigwedge_{g\in\GG} weak^g\right) \land
                 \left(\bigvee_{g\in\GG} anchor^g \right)\land
    \left(\begin{array}[c]{l}
    \varphi_{rel}^\GG\wedge\psi^\GG\wedge\psi^\GG_{pos}\wedge\psi^\GG_{low}
    \wedge \\
     \wedge ~\psi^\GG_{high} \wedge \psi^\GG_{bound}\wedge \psi^\GG_{ne}
    \end{array}\right)
\]
The first two conjuncts specify that $f_{\mu_L,\mu_H}$ is a level
mapping which orients $\GG$, the third is specified as
$\varphi_{rel}^\GG = \bigwedge_{g\in\GG}\varphi_{rel}^g$, and the rest
are explained below:

\vspace{-2mm}
\paragraph{Proposition  $\psi^\GG$} imposes the intended meanings on
$weak^g$, $strict^g$ and $anchor^g$ (see Def.\ \ref{orienting} and Lemma \arabic{anchor
  correct}).

\vspace{-6mm}
\[\small
  \psi^{\GG} = \hspace{-6mm}
  \bigwedge_{g=\; p(\bar x)\scgarrow \pi ; \; q(\bar y)}
  \left(
  \begin{array}{l}
      \displaystyle
       weak^g \leftrightarrow (weak^g_{low} \wedge weak^g_{high})\quad
  \wedge\\
      \displaystyle
      strict^g \leftrightarrow 
           (weak^g \wedge (strict^g_{low} \vee strict^g_{high}))\quad
  \wedge\\
      \displaystyle
  anchor^g \leftrightarrow\hspace{-2mm}
  \begin{array}[t]{l}
    ((p\neq q) \wedge (||num_S^p\neq num_S^q|| \wedge ||num_B^p\neq num_B^q||))
    ~\vee\\
    ((p= q) \wedge strict^g \wedge bound^g)
  \end{array}

   \end{array}\hspace{-3mm}
 \right)\]

\vspace{-3mm}
\paragraph{Proposition $\psi^\GG_{pos}$} enforces that the node numberings
$num_S$ and $num_B$
agree with sets $S$ and $B$, cf.\ Lemma \arabic{anchor correct}:
\vspace{-2mm}
\[\small
  \psi^\GG_{pos} = \hspace{-4mm}
  \bigwedge_{g=\;p(\bar x)\scgarrow \pi ; \; q(\bar y)}\hspace{-2mm}
  \left(
  \begin{array}{l}\displaystyle
  (||num_S^p<num_S^q|| \rightarrow strict^g) ~\wedge \\(||num_B^p<num_B^q|| \rightarrow bound^g)
   \end{array}
 \right)
\]

\vspace{-4mm}
\paragraph{Proposition $\psi^\GG_{high}$} imposes that $weak^g_{high}$
and $strict^g_{high}$ are {\tt true} exactly when $p_f^{high}(\bar x)
\succsim^{\mu_H} q_f^{high}(\bar y)$ and $p_f^{high}(\bar x)
\succ^{\mu_H} q_f^{high}(\bar y)$, respectively.
We focus on the case when $\mu_H=max$, the other cases are similar and
omitted for lack of space.  The encoding of proposition
$\psi^\GG_{low}$ is similar (and also omitted for lack of space).
\[\small
  \psi_{high}^\GG = \hspace{-6mm}
  \bigwedge_{g=\; p(\bar x)\scgarrow \pi ; \; q(\bar y)}
  \left(
  \hspace{-2mm}\begin{array}{l}\displaystyle
  weak^g_{high} \leftrightarrow \hspace{-2mm}
     \bigwedge_{1\leq j\leq m} 
\!\!     \left(
        q_j^{high} \rightarrow \hspace{-2mm}\bigvee_{1\leq i\leq n} 
                      \hspace{-2mm}(p_i^{high}\wedge e^g_{x_i\geq y_j})
     \right) \wedge\\
  \displaystyle
  strict^g_{high} \leftrightarrow \hspace{-2mm}
     \bigwedge_{1\leq j\leq m} 
\!\!     \left(
        q_j^{high} \rightarrow \hspace{-2mm}\bigvee_{1\leq i\leq n} 
                      \hspace{-2mm}(p_i^{high}\wedge e^g_{x_i>y_j})
     \right)
     \land\bigvee_{1\leq i\leq n} \hspace{-2mm}p_i^{high} 
   \end{array}
\!\!\! \right)
\]
The propositional variables $p_i^{low}$, $p_i^{high}$, $q_j^{low}$, and
$q_j^{high}$ ($1\le i\le n, 1\le j\le m$) indicate the argument
positions of $p/n$ and $q/m$ selected by the level mapping
$f_{\mu_L,\mu_H}$ for the low and high sets, respectively.
The first subformula specifies that a transition rule is weakly
oriented by the $max$ order if for each $j$ where $q_j^{high}$ is
selected (i.e., the $j$-th argument of $q$ is in $q^{high}$), at least
one of the selected positions $p_i^{high}$ has to ``cover''
$q_j^{high}$ with a weak constraint $x_i\geq y_j$. The second
subformula is similar for the case of strict orientation with the
additional requirement that at least one $p_i^{high}$ should be
selected.

\vspace{-3mm}
\paragraph{Proposition $\psi^\GG_{bound}$} constrains $bound^g$ to be
{\tt true} iff $p_f^{high} \supsetpluseq p_f^{low}$ is satisfied by
$g$.  As observed in Sect.\ \ref{sec-sat}, this test boils down to
four cases. We illustrate the encoding for the case
$min(p_f^{high}(\bar x)) \geq max(p_f^{low}(\bar x))$:
\[ \small
   \psi^\GG_{bound} = \hspace{-4mm}
      \bigwedge_{g=\;p(\bar x)\scgarrow \pi ; \; q(\bar y)}
      \left(
      bound^g \leftrightarrow
         \bigwedge_{1\leq i\leq n,1\leq j\leq n} 
         \left(
            (p_i^{high} \wedge p_j^{low}) \rightarrow e^g_{x_i\geq x_j}
         \right) 
     \right)
    \]

\paragraph{Proposition $\psi^\GG_{ne}$} 
constrains the level mapping so that for each program point $p$, the
sets $p^{low}$ and $p^{high}$ are not empty.  Let $\PP$ denote the set
of program points in $\GG$.%

\vspace{-2mm}
{\begin{small}
 \[ \psi^\GG_{ne} = \hspace{-1mm}
      \bigwedge_{p\in\PP}
      \left(
         \left(
         \bigvee_{1\leq i\leq n} 
            {p_i^{low}}
         \right) 
	\wedge
         \left(
         \bigvee_{1\leq i\leq n} 
            {p_i^{high}}
         \right) 
     \right)
    \]\end{small}}

\vspace{-4mm}
\section{Implementation and Experiments}\label{sec:exp}

We implemented a 
termination analyzer based on our SAT
encoding for MCNP and tested it on three benchmark suites. Experiments
were conducted running the \textsf{SAT4J} 
\cite{sat4j} solver on an Intel Core i3 at 2.93 GHz with 2 GB RAM.
For further details on our experiments see
\ref{app:experiments} and 
\url{http://aprove.informatik.rwth-aachen.de/eval/MCNP}.

\vspace{-3mm}
\paragraph{Suite 1}
 consists of 81 MCSs obtained from various
research papers on termination and from abstracting
textbook style C programs.\footnote{Using a translator developed by
  A.\ Ben-Shabtai and Z.\ Mann at Tel-Aviv Academic College.}
MCNP proves 66 of them terminating with an average runtime of 0.55s
(maximal runtime is 5.15s). This suite contains the 32 examples from
the evaluation of \cite{RTA09}. That paper introduced \emph{integer
  term rewrite systems} (ITRSs), where standard operations on integers
are pre-defined, and showed how to use a rewriting-based termination
prover like \aprove\ for algorithms on integers. MCNP shows
termination of 27 of these.  \aprove\footnote{Using an Intel Core 2
  Quad CPU Q9450 at 2.66 GHz with 8 GB RAM.}  proves termination of
these 27 and one more example.  On the 32 examples from \cite{RTA09},
the average runtime of MCNP is 0.22s, whereas the average runtime of
\aprove\ is 5.3s for the examples with no timeout (\aprove\ times out
after 60s on 4 examples).
This shows that MCNP is sufficiently powerful for representative
programs on integers and demonstrates the efficiency of our SAT-based
implementation.  The comparison with \aprove\ on the examples from
\cite{RTA09} indicates that MCNP has about the same precision and is
significantly faster.

\vspace{-3mm}
\paragraph{Suite 2}
originates from the Java Bytecode (JBC) programs in the \emph{JBC} and
\emph{JBC Recursive} categories of the \emph{International Termination
  Competition} 2010.\footnote{In this competition, \aprove, \costa,
  and \textsf{Julia} competed against each other.  \\ See
  \url{http://www.termination-portal.org/wiki/Termination_Competition}
  for details.}
165 MCS instances were obtained by first applying the preprocessor of
the termination analyzer \costa\ \cite{FMOODS08} resulting in
(binary clause) constraint
logic programs with linear constraints (CLPQ).
After minor processing, these are abstracted to MCSs (applying SWI
Prolog with its CLPQ library).
MCNP provides a termination proof for 92 of these 
with an average runtime of 0.66s (maximal runtime is 16.31s).
In contrast, \costa\footnote{Experiments for \costa\ were performed on
  an Intel Core i5 at 3.2 GHz with 3 GB RAM.}\ shows termination of
102 programs. However, it encounters a (120 second) timeout on 5
instances. \costa's average runtime for the examples with no timeout
is 0.076s.
From these experiments we see that although MCNP is based on very
simple ranking functions, it is able to provide many of the proofs,
and does not encounter timeouts.  Moreover, there are 5 programs where
MCNP provides a proof and \costa\ does not (4 due to timeouts).

\vspace{-3mm}
\paragraph{Suite 3.} Here,
the Competition 2010 version of the termination analyzer \aprove\
abstracts JBC programs from the (non-recursive) \emph{JBC} category of
the Termination Competition 2010 to ITRSs. (This abstraction from
\cite{JBC-Correctness,RTA10} only works for programs without
recursion.)
To further transform ITRSs into MCSs, we apply an abstraction which
maps terms to their size and replaces non-linear arithmetic
sub-expressions by fresh variables. This results in  a  CLPQ representation
which is further abstracted to MCSs as for Suite 2.
For the resulting 127 instances, MCNP provides 63 termination proofs,
8 timeouts after 60s, and an average runtime of 5.76s
(we count timeouts as 60s).
To compare, we apply \aprove\ directly\footnote{Using an Intel Xeon
  5140 at 2.33 GHz with 16 GB RAM and imposing a time limit of 60s.}
but fix the abstraction to be the same as in the preprocessor for
MCNP. This results in 73 termination proofs and 8 timeouts with an
average time of 14.16s. There are 5 instances where MCNP
provides a proof not found by \aprove.
Applying \aprove\ without fixing the abstraction gives 95 termination
proofs, 19 timeouts, and an average time of 17.12s (there are still 3
instances where MCNP provides a proof not found by \aprove).
This shows that the additional proving power in \aprove\ comes
primarily from the search for the right abstraction. Once fixing the
abstraction, MCNP is of similar precision and much
faster.
Thus, it could be fruitful to use a combination of tools where the
MCNP-analyzer is tried first and the rewrite-based analyzer is only
applied for the remaining ``hard'' examples.

 \section{Conclusion}
 We introduced a new approach to prove termination of
 monotonicity-constraint transition systems. The idea is to construct
 a ranking structure, of a novel kind, extending previous work in this
 area.  To verify whether a MCS has such a ranking structure, we use
 an algorithm based on SAT solving. We implemented our algorithm and
 evaluated it in extensive experiments. The results demonstrate the
 power of our approach and show that its integration into termination
 analyzers for Java Bytecode advances the state of the art of
 automated termination analysis.

\medskip

\noindent
\textbf{Acknowledgment.} We thank Samir Genaim for help with the  benchmarking.

 \appendix
\newpage
\section{Proofs}
\label{app:proofs}

\setcounter{theorem}{\value{ranking structure implies termination}}
\addtocounter{theorem}{-1}
\begin{theorem} 
If there is a ranking structure for $\GG$, then $\GG$ terminates.
\end{theorem}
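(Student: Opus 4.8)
\noindent
The plan is to argue by contradiction. Suppose $\GG$ has an infinite run $\sigma = p_0(\bar x_0) \stackrel{\pi_0}{\to} p_1(\bar x_1) \stackrel{\pi_1}{\to} \cdots$; I will produce an infinite $\succ$-descending chain inside a well-founded set ${\mathcal D}_+$, a contradiction. \emph{Step 1 (dominant level).} Since $\GG$ is finite and $\GG = \GG_1 \cup \cdots \cup \GG_m$ (Def.~\ref{def:structure}), some transition rule occurs infinitely often in $\sigma$; let $i^\ast$ be the least index for which some $g \in \GG_{i^\ast}$ occurs infinitely often. By minimality, the rules of $\GG_1 \cup \cdots \cup \GG_{i^\ast-1}$ occur only finitely often, so $\sigma$ has an infinite suffix $\sigma'$ using only rules of $\GG' := \GG \setminus (\GG_1 \cup \cdots \cup \GG_{i^\ast-1})$, in which $g$ still occurs infinitely often. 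By Def.~\ref{def:structure} (trivially when $i^\ast = 1$, as then $\GG' = \GG$), $\Phi := \Phi_{i^\ast}$ anchors $g$ for $\GG'$.

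\emph{Step 2 (segment-wise anchor property).} Write $\sigma'$ as $q_0(\bar y_0) \to q_1(\bar y_1) \to \cdots$ and let $n_1 < n_2 < \cdots$ be the positions where $g$ is applied. For each $j$, the finite run $q_{n_j}(\bar y_{n_j}) \to \cdots \to q_{n_{j+1}+1}(\bar y_{n_{j+1}+1})$ begins and ends with $g$, so Def.~\ref{def:anchor} yields: (i) $\Phi(q_\ell(\bar y_\ell)) \succsim \Phi(q_{\ell+1}(\bar y_{\ell+1}))$ for all $n_j \le \ell \le n_{j+1}$, (ii) at least one such step is strict, and (iii) $\Phi(q_\ell(\bar y_\ell)) \in {\mathcal D}_+$ for some $n_j \le \ell \le n_{j+1}$. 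As these segments cover every position $\ge n_1$, (i) shows the sequence $(\Phi(q_\ell(\bar y_\ell)))_{\ell \ge n_1}$ is $\succsim$-descending, while (ii) and (iii) supply strict steps and ${\mathcal D}_+$-valued states at arbitrarily large positions.

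\emph{Step 3 (the forbidden chain).} Recall that the strict part of a quasi-order satisfies $a \succsim b \succ c \succsim d \Rightarrow a \succ d$. Choose positions $\ell_1 < \ell_2 < \cdots$ inductively: let $\ell_1$ be any ${\mathcal D}_+$-valued position; given $\ell_r$, pick by (ii) a strict step $q_t(\bar y_t) \to q_{t+1}(\bar y_{t+1})$ with $t \ge \ell_r$, and then by (iii) a ${\mathcal D}_+$-valued position $\ell_{r+1} \ge t+1$. Then $\Phi(q_{\ell_r}(\bar y_{\ell_r})) \succsim \Phi(q_t(\bar y_t)) \succ \Phi(q_{t+1}(\bar y_{t+1})) \succsim \Phi(q_{\ell_{r+1}}(\bar y_{\ell_{r+1}}))$, hence $\Phi(q_{\ell_r}(\bar y_{\ell_r})) \succ \Phi(q_{\ell_{r+1}}(\bar y_{\ell_{r+1}}))$, with all terms in ${\mathcal D}_+$. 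This infinite $\succ$-descending chain in ${\mathcal D}_+$ contradicts well-foundedness.

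\emph{Main obstacle.} The subtle part is Step 3. Because ${\mathcal D}$ itself need not be well-founded (only ${\mathcal D}_+$ is) and ${\mathcal D}_+$ need not be downward closed, the naive descending chain of $\Phi$-values along the run is not enough. One must weave together the \emph{intermittent} ${\mathcal D}_+$-membership guaranteed by the anchor condition with the strict steps, exploiting compatibility of $\succ$ with $\succsim$, to extract a chain confined to ${\mathcal D}_+$. The rest is the familiar size-change pigeonhole argument (plus the routine fact that ``weakly descending with infinitely many strict steps'' transfers to a subsequence via that compatibility property).
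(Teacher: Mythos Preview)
Your proof is correct and follows essentially the same route as the paper's: argue by contradiction, pick the minimal index $i^\ast$ among rules used infinitely often, pass to a tail of the run using only rules of $\GG\setminus(\GG_1\cup\cdots\cup\GG_{i^\ast-1})$, and then exploit the anchor property between consecutive occurrences of $g$ to extract an infinite strictly descending chain inside ${\mathcal D}_+$. Your Step~3 spells out carefully (via the compatibility of $\succ$ with $\succsim$) the construction of the ${\mathcal D}_+$-valued subsequence that the paper merely asserts ``is not hard to show''; otherwise the arguments coincide.
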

\begin{proof}
  Suppose that $\GG$ has an infinite run $\tilde s = p_0(\bar
  x_0)\stackrel{\pi_0}{\to} p_1(\bar x_1)\stackrel{\pi_1}{\to}
  p_2(\bar x_2) \dots$.  Let $\cal H$ be the set of transition rules
  that are applied infinitely often in this run.  Using the notation
  of Def.~\ref{def:structure}, choose $g\in \mathcal H$ such that
  $i(g)$ is minimal. Then $g$ is a $\Phi_{i(g)}$-anchor for a subset
  of $\GG$ containing ${\mathcal H}$.  Consider the infinite tail of
  $\tilde s$ that stays within $\cal H$ and note that it includes
  infinitely many occurrences of $g$. Using
  Def.~\ref{def:anchor}, it is not hard to show that there is an
  infinite sequence $i_1<i_2<i_3<\cdots$, such that for all $k>0$,
  $\Phi_{i(g)}(p_{i_k}(\bar x_{i_k})) \in {\mathcal D}_+$, and in
  addition, $\Phi_{i(g)}(p_{i_k}(\bar x_{i_k})) \succ
  \Phi_{i(g)}(p_{i_{k+1}}(\bar x_{i_{k+1}}))$.  This contradicts the
  well-foundedness of ${\mathcal D}_+$, thus we conclude that such an
  infinite run cannot exist.
\end{proof}

\setcounter{theorem}{\value{underlying domain well founded}}
\addtocounter{theorem}{-1}
\begin{lemma}
  For all $\mu\in \{max, min, ms, dms\}$,
  $(\wp_n(\ints),\succsim^\mu)$ is a total quasi-order, with
  $\succ^\mu$ its strict part; and
  \mbox{$(\wp_n(\ints),\succsim^\mu)_+$} is well-founded. 
\end{lemma}

\begin{proof}
  The claims are straightforward for the $max$ and $min$ orders.  For
  the multiset orders, since our value domain ($\ints$) is totally
  ordered, we will justify the claims by referring to properties of
  the lexicographic order. Let $S, T \in \wp_n(\ints)$.  For the
  multiset order ($ms$), let $tup(S)$ be the tuple consisting of the
  elements of $S$ in non-increasing order.  If $S\ne T$, then either
  one tuple is a prefix of another (then the larger multiset is also
  greater under $\succ^{ms}$), or there is a first position where the
  elements differ. If in this first position the element of $S$ is
  larger, it is easy to show that $S\succ^{ms} T$. Thus,
  $\succsim^{ms}$ agrees with the lexicographic ordering on the
  tuples, which proves that it is a total quasi-order (in fact, a
  total order).
 
  Multisets in $(\wp_n(\ints),\succsim^{ms})_+$ map to tuples of
  non-negative integers; it is well-known that the lexicographic order
  on tuples of non-negative integers is well-founded.
 
  For $\succ^{dms}$ we argue in the same way, using tuples in
  non-decreasing order.
\end{proof}

\setcounter{theorem}{\value{lem:decdiff}}
\addtocounter{theorem}{-1}
\begin{lemma} 
  Let $L, H$ be two multisets of compatible types $\mu_L, \mu_H$, and
  let $\mu_D$ be the type of $H-L$.  Let $L', H'$ be of the same types
  as $L, H$ respectively.  Then
\begin{eqnarray*}
H\succsim^{\mu_H} H' \land L \precsim^{\mu_L} L'   
             &\Longrightarrow& H-L \succsim^{\mu_D} H'-L'  ;\\
H\succ^{\mu_H} H' \land L \precsim^{\mu_L} L'
             & \Longrightarrow&  H-L \succ^{\mu_D} H'-L'  ;\\
H\succsim^{\mu_H} H' \land L \prec^{\mu_L} L'
             & \Longrightarrow&  H-L \succ^{\mu_D} H'-L'  \,.
\end{eqnarray*}
\end{lemma}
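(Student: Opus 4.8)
The plan is to follow the six-way case distinction of Def.~\ref{def:ms-diff}; these group into two families, each treated uniformly. In the family with $\mu_L\in\{max,min\}$, the difference $H-L$ is simply $H$ with the constant $\mu_L(L)$ subtracted from every element. In the family with $\mu_L\in\{ms,dms\}$ and $\mu_H\in\{min,max\}$, the difference $H-L$ is $\{\mu_H(H)-\ell\mid\ell\in L\}$, i.e.\ $L$ negated elementwise and then shifted by the constant $\mu_H(H)$. In both families the three implications reduce to monotonicity properties of the operations ``shift by a constant'' and ``negate'', combined with the way the hypotheses on the pairs $H,H'$ and $L,L'$ translate into statements about these constants and about $ms$/$dms$ replacement steps.

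First I would record three routine facts. \emph{(i)}~All multisets involved are non-empty --- the difference of Def.~\ref{def:ms-diff} is only defined for non-empty operands, so $L',H'$ are non-empty too --- hence $S\succsim^{max}T\iff max(S)\ge max(T)$ and $S\succsim^{min}T\iff min(S)\ge min(T)$, with the analogous strict versions; thus $L\precsim^{\mu_L}L'$ yields $\mu_L(L)\le\mu_L(L')$ (strictly if $L\prec^{\mu_L}L'$), and $H\succ^{\mu_H}H'$ yields $\mu_H(H)>\mu_H(H')$ when $\mu_H\in\{min,max\}$. \emph{(ii)}~For each $\mu\in\{max,min,ms,dms\}$, adding a fixed integer $c$ to every element is an automorphism of $(\wp_n(\ints),\succsim^\mu)$, and adding $c\ge 0$ yields $S{+}c\succsim^\mu S$, strictly if $c>0$; this is read off the definitions, since a replacement step $U\mapsto V$ turns into $U{+}c\mapsto V{+}c$ and $max/min$ are just shifted by $c$. \emph{(iii)}~Elementwise negation is an order-reversing bijection between $\succsim^{ms}$ and $\succsim^{dms}$ (and between $\succ^{ms}$ and $\succ^{dms}$): a step replacing non-empty $U\subseteq A$ by $V$ with $U\succ^{max}V$ becomes, after negation, a step replacing $(-V)\subseteq(-B)$ by the non-empty $(-U)$ with $(-V)\succ^{min}(-U)$, so $A\succsim^{ms}B\iff(-B)\succsim^{dms}(-A)$, and symmetrically. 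Combining \emph{(ii)} and \emph{(iii)}, the transformation $\Theta_c(S)=\{c-s\mid s\in S\}$ satisfies $A\succsim^{ms}B\Rightarrow\Theta_c(B)\succsim^{dms}\Theta_c(A)$ (strictly if $A\succ^{ms}B$), and likewise with $ms$ and $dms$ swapped.

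With these in hand, \textbf{Case 1} ($\mu_L\in\{max,min\}$, hence $\mu_D=\mu_H$): put $c=\mu_L(L)\le\mu_L(L')=c'$, so $H-L=H{+}({-}c)$ and $H'-L'=H'{+}({-}c')$. Fact \emph{(ii)} gives $H{+}({-}c)\succsim^{\mu_H}H'{+}({-}c)$ from $H\succsim^{\mu_H}H'$, and $H'{+}({-}c)\succsim^{\mu_H}H'{+}({-}c')$ from ${-}c\ge{-}c'$; chaining yields the weak conclusion, the strictness coming from the first link when $H\succ^{\mu_H}H'$ and from the second when $c<c'$ (i.e.\ $L\prec^{\mu_L}L'$). \textbf{Case 2} ($\mu_L\in\{ms,dms\}$, $\mu_H\in\{min,max\}$, hence $\mu_D=\overline{\mu_L}$): put $d=\mu_H(H)\ge\mu_H(H')=d'$, so $H-L=\Theta_d(L)$ and $H'-L'=\Theta_{d'}(L')$. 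Since $L\precsim^{\mu_L}L'$ is $L'\succsim^{\mu_L}L$, the combined consequence of \emph{(ii)}+\emph{(iii)} gives $\Theta_{d'}(L)\succsim^{\overline{\mu_L}}\Theta_{d'}(L')$ (strictly if $L\prec^{\mu_L}L'$), while $d\ge d'$ and \emph{(ii)} give $\Theta_d(L)=\Theta_{d'}(L){+}(d{-}d')\succsim^{\overline{\mu_L}}\Theta_{d'}(L)$ (strictly if $d>d'$, which by \emph{(i)} holds when $H\succ^{\mu_H}H'$); chaining the two covers all three implications.

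The main obstacle I anticipate is bookkeeping rather than ideas: carefully tracking which link of each chain carries the strict inequality, and in particular checking that in Case~2 the order reversal of fact \emph{(iii)} still lets strictness on \emph{either} the $H$-side or the $L$-side of a hypothesis force a strict conclusion for $H-L$ versus $H'-L'$. A secondary point to keep straight is that the combination ``$\mu_L\in\{ms,dms\}$ and $\mu_H\in\{ms,dms\}$'' never arises, since Def.~\ref{def:ms-diff} leaves the difference undefined there; so only the six cases above need to be treated.
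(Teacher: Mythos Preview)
Your proof is correct and follows essentially the same approach as the paper's: both rely on (a) invariance of all four orders under a uniform shift and (b) the $ms$/$dms$ duality under elementwise negation, and both split the six compatible type-pairs into the two families of Def.~\ref{def:ms-diff}. The only packaging difference is that the paper isolates the negation duality as a separate lemma and then \emph{reduces} the $\mu_L\in\{ms,dms\}$ family to the $\mu_L\in\{max,min\}$ family via that lemma, whereas you unwind this reduction directly through your map $\Theta_c$; the underlying argument is the same.
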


In order to prove
 Lemma~\arabic{lem:decdiff} we first need the following definition and lemma.

\begin{definition}[multiset negation]
\label{def:ms-neg}
Let $S=\set{s_1,s_2,\dots,s_n}$ be a multiset of integers.
The negation of $S$, $(-S)$, is $\set{-s_1,-s_2,\dots,-s_n}$.
\end{definition}

\begin{lemma}
\label{lemma:dual-ord}
Let $S,T$ be non-empty multisets.
\begin{enumerate}
\item If $S \succsim^{max} T$ then $(-T) \succsim^{min} (-S)$ and if $S \succ^{max} T$ then $(-T) \succ^{min} (-S)$.
\item If $S \succsim^{min} T$ then $(-T) \succsim^{max} (-S)$ and if $S \succ^{min} T$ then $(-T) \succ^{max} (-S)$.
\item If $S \succsim^{ms} T$ then $(-T) \succsim^{dms} (-S)$ and if $S \succ^{ms} T$ then $(-T) \succ^{dms} (-S)$.
\item If $S \succsim^{dms} T$ then $(-T) \succsim^{ms} (-S)$ and if $S \succ^{dms} T$ then $(-T) \succ^{ms} (-S)$.
\end{enumerate}
\end{lemma}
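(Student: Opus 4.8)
The plan is to reduce the whole lemma to two elementary identities for any non-empty multiset $S$, namely $\max(-S) = -\min(S)$ and $\min(-S) = -\max(S)$, together with the observation that negation is an involution that commutes with multiset union and with the ``replace a sub-multiset'' operation of Def.~\ref{def:ms-exts}: if $T$ is obtained from $S$ by replacing $U \subseteq S$ with $V$, then $-S$ is obtained from $-T$ by replacing $-V \subseteq -T$ with $-U$. All four items then fall out by unfolding definitions.

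For parts (1) and (2) I would argue directly. Since $S$ and $T$ are non-empty, the ``one side empty'' alternatives in Def.~\ref{def:ms-exts} never fire, so $S \succsim^{max} T$ is literally $\max(S) \ge \max(T)$ and $S \succ^{max} T$ is $\max(S) > \max(T)$. Using the identities, $\max(S) \ge \max(T) \iff -\min(-S) \ge -\min(-T) \iff \min(-T) \ge \min(-S)$, and since $-S,-T$ are non-empty this is exactly $(-T) \succsim^{min} (-S)$; the strict case is the same with $>$. Part (2) is the mirror image, with the roles of $\max$ and $\min$ exchanged.

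For parts (3) and (4) I would peel off the single replacement step in the definitions of $\succ^{ms}$ and $\succ^{dms}$. Suppose $S \succ^{ms} T$, so $T = (S \setminus U) \uplus V$ for some non-empty $U \subseteq S$ and some (possibly empty) $V$ with $U \succ^{max} V$. Negating, $-S = ((-T) \setminus (-V)) \uplus (-U)$, i.e.\ $-S$ arises from $-T$ by replacing the sub-multiset $-V$ with the non-empty multiset $-U$; to get $(-T) \succ^{dms} (-S)$ it remains to check $(-V) \succ^{min} (-U)$. If $V \neq \emptyset$ this is immediate from part (1) applied to $U \succ^{max} V$; if $V = \emptyset$ then $-V = \emptyset$ while $-U \neq \emptyset$, and $\emptyset \succ^{min} (-U)$ holds directly by the ``$S$ empty while $T$ not'' clause of Def.~\ref{def:ms-exts}. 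The weak case $S \succsim^{ms} T$ splits into $S = T$ (trivial, as then $-S = -T$) and $S \succ^{ms} T$ (already done). Part (4) is the symmetric argument: from $S \succ^{dms} T$ with $T = (S \setminus U)\uplus V$, $V$ non-empty, $U \succ^{min} V$, one obtains $-S$ from $-T$ by replacing the non-empty $-V$ with the possibly-empty $-U$, and the required $(-V) \succ^{max} (-U)$ follows from part (2) when $U \neq \emptyset$ and directly from the ``$T$ empty while $S$ not'' clause when $U = \emptyset$. The only real care needed — and the main (if minor) obstacle — is exactly this empty/non-empty bookkeeping: parts (1) and (2) are stated only for non-empty multisets, so the sub-cases where $V$ (in an $ms$-step) or $U$ (in a $dms$-step) is empty must be verified against the literal definitions of $\succ^{min}$ and $\succ^{max}$ rather than by appeal to (1)/(2); everything else is routine.
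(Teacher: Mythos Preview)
Your proof is correct and follows the same overall strategy as the paper: parts (1) and (2) are dismissed as trivial, and part (3) is handled by exhibiting, from the $ms$-witness for $S \succ^{ms} T$, a $dms$-witness for $(-T) \succ^{dms} (-S)$. The difference is in the choice of witness. The paper takes the canonical decomposition $C = S \cap T$, $S_{\mathrm{rest}} = S \setminus C$, $T_{\mathrm{rest}} = T \setminus C$ and asserts $S_{\mathrm{rest}} \succ^{max} T_{\mathrm{rest}}$ ``by the definition of $\succ^{ms}$''; this is true over a total base order but is not literally the definition, so a small justification is elided. You instead work directly with the existential witness $(U,V)$ supplied by Def.~\ref{def:ms-exts}, which sidesteps that point entirely. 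Your explicit treatment of the empty sub-cases (where $V=\emptyset$ in an $ms$-step, or $U=\emptyset$ in a $dms$-step) is also more careful than the paper, which silently applies (1) to $S_{\mathrm{rest}},T_{\mathrm{rest}}$ even though $T_{\mathrm{rest}}$ may be empty; the conclusion still holds there via the empty clauses of Def.~\ref{def:ms-exts}, exactly as you note.
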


\begin{proof}
\label{proof:dual-ord}
We only prove (3), since (1) and (2) are trivial and (4) is similar to
(3). $S \succsim^{ms} T \wedge S \nsucc^{ms} T$ holds iff $S=T$ and in
this case $(-T) \succsim^{dms} (-S)$ by the definition. Let $S
\succ^{ms} T$. We need to prove that $(-T) \succ^{dms} (-S)$.

Let $C=S \cap T$, $S_{rest}=S \setminus C$ and $T_{rest}=T \setminus
C$. Now we can express $(-S)$ and $(-T)$ in the following way:
$(-S)=(-C) \cup (-S_{rest})$ and $(-T)=(-C) \cup (-T_{rest})$.  By the
definition of $\succ^{ms}$, $S_{rest} \succ^{max} T_{rest}$. So
$(-T_{rest}) \succ^{min} (-S_{rest})$. According to the definition of
$\succ^{dms}$ we conclude that $(-T) \succ^{dms} (-S)$.
\end{proof}

Next we prove Lemma~\arabic{lem:decdiff}.
\begin{proof*}
  The following properties are easy to prove:
	\begin{description}
        \item[(i)] If the elements of two multisets $S$ and $T$ can be
          put in one-to-one correspondence $(s_i,t_i)$ such that
          $s_i\ge t_i$ in all pairs, then $S\succsim^\mu T$ for all
          $\mu$. If for all pairs $s_i> t_i$, then $S\succ^\mu T$.
	\item[(ii)] If $H,H'$ are multisets and $c\in\ints$, then
          shifting all elements of both sets by $c$ preserves the
          order relations among them.
	\end{description}

Now we will prove the lemma for each of the cases.
{
\begin{enumerate} \leftmargin 0pt \labelwidth 1em \itemindent 1em  
\item {$\mu_L=max$:} According to property (ii) we have
  \[H\succsim^{\mu_H} H' \,\Rightarrow\, \sset{ h-max(L')}{h\in H }
  \succsim^{\mu_H} \sset{ h'-max(L')}{h'\in H' }\] That is,
  $H-L'\succsim^{\mu_H} H'-L'$. In the same way we can see that
  $H\succ^{\mu_H} H' \Rightarrow H-L'\succ^{\mu_H} H'-L'$.

  Since $max(L') \geq max(L)$, according to property (i) we have
  $H-L\succsim^{\mu_H} H-L'$ and if $max(L') > max(L)$ then
  $H-L\succ^{\mu_H} H-L'$.

  By transitivity,
  \[H-L\succsim^{\mu_H} H-L' \quad\wedge\quad H-L\succsim^{\mu_H}
  H'-L' \quad\Rightarrow\quad H-L\succsim^{\mu_H} H'-L'\] and if one
  of the orderings is strict then $H-L\succ^{\mu_H} H'-L'$.

\item {$\mu_L=min$:} The proof is similar to (1).

\item {$\mu_L=ms,\mu_H=min$:} Given $L \precsim^{ms} L'$ and $H
  \succsim^{min} H'$, according to Lemma~\ref{lemma:dual-ord} we have
  $(-L) \succsim^{dms} (-L')$ and $(-H) \precsim^{max}
  (-H')$. According to part (1) of the proof, we obtain $(-L-(-H))
  \succsim^{dms} (-L'-(-H'))$.

  Moreover, by Def.\ \ref{def:ms-diff} (1), $(-L-(-H))=\sset{
    (-\ell)-max(-H)}{(-\ell)\in (-L) }=\sset{ min(H)-\ell}{\ell\in L
  }=H-L$ by Def.\ \ref{def:ms-diff} (2).  Similarly
  $(-L'-(-H'))=H'-L'$. So $H-L \succsim^{dms} H'-L'$.

  We can easily see that if $L \prec^{ms} L'$ or $H \succ^{min} H'$
  then $H-L \succ^{dms} H'-L'$.

\item {$\mu_L=ms,\mu_H=max$:} The proof is similar to (3).
\item {$\mu_L=dms,\mu_H=min$:} The proof is similar to (3).
\item {$\mu_L=dms,\mu_H=min$:} The proof is similar to (3).
\end{enumerate}}
\end{proof*}

\setcounter{theorem}{\value{ranking thm}}
\addtocounter{theorem}{-1}
\begin{theorem}
  Let $\GG$ be a MCS and $f$ a level mapping.  Let $g=p(\bar x)
  \scgarrow \pi; q(\bar y)\,$ be such that every cycle $\cal C$ including
  $g$ satisfies these conditions: (1) all transitions in $\cal C$ are
  oriented by $f$, and at least one of them strictly; (2) at least one
  transition in $\cal C$ is bounded w.r.t.~$f$.
Then $g$ is a $\Phi_f$-anchor in $\GG$, where 
$\Phi_f(p(\bar x)) = p_f^{high}(\bar x) - p_f^{low}(\bar x)$.
\end{theorem}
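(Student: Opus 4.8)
The plan is to verify directly the three requirements of Def.~\ref{def:anchor} for the function $\Phi_f$, taking as underlying quasi-order $\mathcal{D} = (\wp_n(\ints),\succsim^{\mu_D})$ and as its well-founded subset $\mathcal{D}_+ = (\wp_n(\ints),\succsim^{\mu_D})_+$, where $\mu_D$ is the type of the difference $p_f^{high} - p_f^{low}$ determined by the (single, fixed) compatible pair $\mu_L,\mu_H$ of the level mapping $f$. Since this type is the same for every program point, $\Phi_f$ is a well-defined function $St \to \mathcal{D}$, and $\mathcal{D}_+$ is well-founded by Lemma~\arabic{underlying domain well founded}.

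So fix a run $p_0(\bar x_0)\stackrel{\pi_0}{\to}\dots\stackrel{\pi_{k-1}}{\to}p_k(\bar x_k)\stackrel{\pi_k}{\to}p_{k+1}(\bar x_{k+1})$ (with $k\ge 1$; a single $g$-transition with $p=q$ is the trivial cycle $\mathcal{C}=\{g\}$, and a $g$ lying on no cycle is a trivial anchor) whose first and last transitions $\pi_0,\pi_k$ are instances of $g$. The key observation is that this run determines a cycle through $g$ that governs it: writing $r_i$ for the transition rule underlying $\pi_i$, the rules $r_0=g,r_1,\dots,r_{k-1}$ form a closed walk $\mathcal{C}$ in the CFG, since $r_0$ leads from $p$ to $q$ and $r_1,\dots,r_{k-1}$ lead from $p_1=q$ back to $p_k=p$. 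This $\mathcal{C}$ contains $g$, and since $\pi_k$ is again an instance of $g\in\mathcal{C}$, every transition $\pi_i$ of the run ($0\le i\le k$) is an instance of a rule of $\mathcal{C}$; hence the theorem's hypotheses about cycles through $g$ all apply to $\mathcal{C}$.

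The rest is then assembling the three facts. By condition~(1), every rule of $\mathcal{C}$ is oriented by $f$, so Corollary~\ref{cor:orientation} yields, for each $i$, $\pi_i \models \Phi_f(p_i(\bar x))\succsim^{\mu_D}\Phi_f(p_{i+1}(\bar y))$ (applying the corollary to $r_i = p_i(\bar x)\scgarrow\pi_i;p_{i+1}(\bar y)$); instantiating $\pi_i$ with the concrete values $\bar x_i,\bar x_{i+1}$, which satisfy it, gives $\Phi_f(p_i(\bar x_i))\succsim^{\mu_D}\Phi_f(p_{i+1}(\bar x_{i+1}))$ for all $0\le i\le k$. By condition~(1) again, some $r_j\in\mathcal{C}$ is oriented strictly, so the inequality at step $j$ is $\succ^{\mu_D}$. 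By condition~(2), some $r_\ell \in \mathcal{C}$, say $r_\ell = p'(\bar x)\scgarrow \pi_\ell; q'(\bar y)$, is bounded w.r.t.~$f$; by Def.~\ref{def:ms-bounded} this means $\pi_\ell \models {p'}_f^{high} \supsetpluseq {p'}_f^{low}$, which, unfolding the definition of $\supsetpluseq$, says precisely that $\Phi_f(p_\ell(\bar x_\ell)) = {p'}_f^{high}(\bar x_\ell) - {p'}_f^{low}(\bar x_\ell)$ lies in $\mathcal{D}_+$. These are exactly the conditions of Def.~\ref{def:anchor}, so $g$ is a $\Phi_f$-anchor for $\GG$.

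I expect the routine parts to be the ``lifting'' from the $\pi\models$-level conclusions of Corollary~\ref{cor:orientation} and Def.~\ref{def:ms-bounded} to statements about a concrete satisfying assignment, together with the observation that the low/high multisets appearing in $\Phi_f(p_i(\bar x_i))$ really are those $f$ selects at $p_i$ (the selection depending only on the program point, by Def.~\ref{def:bi-multiset level mapping}). The main obstacle is the step in the second paragraph: pinning down the correspondence between an arbitrary run whose endpoints are $g$-transitions and a cycle in the CFG through $g$, so that the cycle hypotheses become usable; once this is set up carefully, everything else is bookkeeping over orientation and boundedness.
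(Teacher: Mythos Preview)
Your proof is correct and follows essentially the same route as the paper's: fix a run beginning and ending with $g$, observe that it traces a cycle $\mathcal{C}$ through $g$ in the CFG, and then read off the weak/strict descent from condition~(1) via Corollary~\ref{cor:orientation} and membership in $\mathcal{D}_+$ from condition~(2) via Def.~\ref{def:ms-bounded}. Your write-up is considerably more explicit than the paper's (which does not even name the cycle), in particular about the choice of $\mathcal{D},\mathcal{D}_+$, the edge cases $k=0$ and $g$ lying on no cycle, and the closed-walk-versus-simple-cycle issue; the underlying argument is the same.
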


\begin{proof}
Consider a run $p_0(\bar
    x_0)\stackrel{\pi_0}{\to} p_1(\bar x_1)\stackrel{\pi_1}{\to} \ldots
    \stackrel{\pi_{k-1}}{\to} p_k(\bar x_k) \stackrel{\pi_{k}}{\to}
    p_{k+1}(\bar x_{k+1}) $ where both $p_0(\bar x_0)\stackrel{\pi_0}{\to}
    p_1(\bar x_1)$ and $p_k(\bar x_k) \stackrel{\pi_{k}}{\to} p_{k+1}(\bar
    x_{k+1})$ correspond to the transition rule $g$.
By assumption (1) of the theorem, and
Corollary~\ref{cor:orientation}, 
$\Phi_f(p_{i}(\bar x_{i}))\succsim  \Phi_f(p_{i+1}(\bar x_{i+1}))$ for all $0\le i \le k$, and, moreover, at least one of these inequalities is strict.
By assumption (2), and Def.~\ref{def:ms-bounded},  we have
$\Phi_f(p_{i}(\bar x_{i}))\in {\mathcal D}_+$ for some $0\le i\le k$.

We conclude that $g$ is a $\Phi_f$-anchor for $\GG$.
\end{proof}

\setcounter{theorem}{\value{NP thm}}
\addtocounter{theorem}{-1}
\begin{theorem}
MCNP is in NP.
\end{theorem}

\noindent {\normalfont\itshape Proof} \par\noindent Let $\GG$ be an MC
system.  If it is in MCNP, there is a ranking structure of polynomial
size (see Def.~\ref{def:irredundant} and subsequent comment).  The
following evidence suffices for verifying the ranking structure:
\begin{enumerate}
\item The list of level mappings, given explicitly: that is, for each
  program point, the high and low sets are listed.
\item For each level mapping $f^i$, the transition rules claimed to be
  oriented or strictly oriented by $f^i$ and those that are claimed to
  be bounded with respect to it; and additional information used to
  verify that these conditions hold.
\end{enumerate}

The additional information mentioned last consists of the set of arcs,
from the MC graph representation, that proves the desired relation
among multisets, according to the observations given in Sect.\ \ref{sec-sat}.
For example, to prove $\pi\models S \succsim^{max} T$, we
require a list of pairs $(x,y)$ with $x\in S$ and $y\in T$ that
satisfy $\pi\models x\ge y$, and include all $y\in T$.

This information has polynomial size and can be verified in polynomial
time by the following algorithm.  First, locally, (strict) orientation
and boundedness are verified with the aid of the supplied information.
Secondly, a counter $i$ is initialized to 1.  The $\Phi_{f^i}$ anchors
are found, according to Thm.~\arabic{ranking thm}, by a polynomial-time
graph algorithm (based on depth-first search).
Then they are removed, $i$ is incremented, and the
procedure is repeated. When the list is exhausted, $\GG$ should be
vacant; otherwise, the verification fails.
{\hspace*{1em}\hbox{\proofbox}}

\setcounter{theorem}{\value{thm:main}}
\addtocounter{theorem}{-1}
\begin{theorem}
The main algorithm succeeds if and only if $\GG$ satisfies MCNP.
\end{theorem}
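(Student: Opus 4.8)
The plan is to follow a run of the Main Algorithm through the sequence of components $\mathcal C_1,\mathcal C_2,\dots$ it pops, the level mappings $f^1,f^2,\dots$ it finds, and the non-empty sets $R_i\subseteq\mathcal C_i$ of rules it removes as anchors. Two bookkeeping facts come first. The algorithm terminates: each iteration deletes the non-empty $R_i$ from the popped component and pushes only strongly connected subcomponents of $\mathcal C_i\setminus R_i$, so the total number of transition rules on the stack strictly decreases. And, since deleting arcs can only split SCCs, an easy invariant shows that at the start of the $i$-th iteration the stack holds exactly the non-singleton SCCs of $\GG^{(i)}:=\GG\setminus(R_1\cup\dots\cup R_{i-1})$; in particular $\mathcal C_i$ is a non-singleton SCC of $\GG^{(i)}$, every cycle of $\GG^{(i)}$ through a rule of $\mathcal C_i$ lies inside $\mathcal C_i$, and the residual $\GG\setminus(R_1\cup\dots\cup R_m)$ left when the stack empties is acyclic.

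For the ``only if'' direction, assume the algorithm succeeds with output $\rho=\tuple{f^1,\dots,f^m}$; I would show $\tuple{\Phi_{f^1},\dots,\Phi_{f^m}}$ (with, in the degenerate case that $\GG$ is already acyclic, a trivial level mapping appended) is a ranking structure of MCNP IRFs, whence $\GG$ is in MCNP. By Lemma~\arabic{anchor correct}, the node numberings produced by the SAT step certify that every $g\in R_i$ meets the conditions of Thm.~\arabic{ranking thm} within $\mathcal C_i$; since the cycles of $\GG^{(i)}$ through $g$ all lie in $\mathcal C_i$ and $f^i$ orients all of $\mathcal C_i$, those conditions also hold in $\GG^{(i)}$, so $g$ is a $\Phi_{f^i}$-anchor of $\GG^{(i)}$ in the sense of Def.~\ref{def:anchor}. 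Using that an anchor of a system stays an anchor in every subsystem containing it (fewer runs, fewer cycles), an induction on $i$ gives $R_1\cup\dots\cup R_i\subseteq\GG_1\cup\dots\cup\GG_i$, where $\GG_1,\dots,\GG_m$ are the sets of Def.~\ref{def:structure} for $\tuple{\Phi_{f^1},\dots,\Phi_{f^m}}$. Any rule not removed into some $R_j$ is, by the time its relevant residual is reached, contained in no cycle of that residual (this is where acyclicity of the final residual and the invariant are used), hence trivially anchored by the remark after Def.~\ref{def:ranking lm}, so it also lands in some $\GG_j$. Thus $\GG_1\cup\dots\cup\GG_m=\GG$, each $\Phi_{f^i}$ is an MCNP IRF, and $\GG$ is in MCNP.

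For the ``if'' direction, assume $\GG$ is in MCNP; I would show the algorithm cannot abort. An abort occurs only when some popped component $\mathcal C_i$ admits no level mapping inducing an MCNP-anchor, which by soundness and completeness of the SAT encoding of Sect.~\ref{A SAT encoding} (again via Lemma~\arabic{anchor correct}) happens only if $\mathcal C_i$ is itself not in MCNP. So it suffices to establish two closure properties. First, MCNP is hereditary: a ranking structure of MCNP IRFs for $\GG$ restricts — after dropping level mappings whose anchor set becomes empty — to one for any subsystem $\GG'\subseteq\GG$, because passing to a subsystem only removes runs and cycles; since $\mathcal C_i$ is an SCC of the subsystem $\GG^{(i)}\subseteq\GG$, it is in MCNP. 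Second, a strongly connected $\mathcal C$ in MCNP admits a level mapping inducing at least one MCNP-anchor: take an irredundant (Def.~\ref{def:irredundant}) ranking structure $\tuple{\Phi_{h^1},\dots}$ of MCNP IRFs for $\mathcal C$; irredundancy forces a rule $g\in\mathcal C$ that $\Phi_{h^1}$ anchors in $\mathcal C$, and over a strongly connected $\mathcal C$ the anchor notion of Def.~\ref{def:anchor} coincides with the criterion of Thm.~\arabic{ranking thm}, so $g$ is actually a $\Phi_{h^1}$-MCNP-anchor of $\mathcal C$, which is exactly what the SAT step searches for. Combining the two, every popped component yields a satisfiable instance, so the algorithm never aborts and hence succeeds.

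I expect the crux to be the second closure property in the ``if'' direction: bridging the permissive anchor notion of Def.~\ref{def:anchor}, used to phrase ranking structures, and the operational criterion of Thm.~\arabic{ranking thm} that the algorithm and its encoding actually target. The delicate point is the needed converse of Thm.~\arabic{ranking thm} over a strongly connected component: if some monotonicity constraint along a cycle through $g$ fails to imply the required multiset inequality between the high and low differences, one must exhibit a concrete integer run that winds once around that cycle, re-enters $g$, and along which $\Phi_{h^1}$ strictly increases at that step, contradicting the anchor property; realizability of such a run rests on the fact that a single traversal of a cycle — unlike an infinite unfolding — never forces an inconsistent chain of monotonicity constraints. (If the definition of MCNP is instead read as using MCNP-anchors throughout, this step is immediate and the whole argument reduces to the bookkeeping above.)
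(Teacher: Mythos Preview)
Your proposal is correct and follows the same overall strategy as the paper: for the ``if'' direction, both you and the paper argue that every popped SCC $\mathcal C$ admits a level mapping with an anchor by picking, from a given MCNP ranking structure, the component of least index meeting $\mathcal C$; for the ``only if'' direction, both assemble the returned level mappings into a ranking structure.

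The main difference is one of care rather than method. The paper's proof is terse: for ``only if'' it simply asserts that the output is a ranking structure ``provided the correctness of the sub-procedures,'' and for ``if'' it picks $g\in\mathcal C$ with $i(g)$ minimal, notes that $\Phi_{f^{i(g)}}$ anchors $g$ in a superset of $\mathcal C$, and concludes that the SAT search will succeed. You spell out the bookkeeping (the stack invariant, termination, the inclusion $R_1\cup\dots\cup R_i\subseteq\GG_1\cup\dots\cup\GG_i$, and the treatment of rules never removed) and---more importantly---you explicitly flag the gap between the semantic anchor of Def.~\ref{def:anchor} and the syntactic MCNP-anchor of Thm.~\arabic{ranking thm} that the SAT encoding actually searches for. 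The paper's proof silently reads ``ranking structure of MCNP IRFs'' in the stronger sense (anchoring via Thm.~\arabic{ranking thm} throughout), under which the leap is immediate; you note that under the weaker literal reading one would need a converse of Thm.~\arabic{ranking thm} over a strongly connected component, and you sketch it. Your final parenthetical is exactly right: with the intended reading, your argument collapses to the paper's, and the extra work you outline is not needed.
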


\begin{proof}
  If the algorithm succeeds, returning $\rho = \tuple{f^1,f^2,\dots}$,
  then $\tuple{\Phi_{f^1},\Phi_{f^2},\dots}$ is a ranking structure
  for $\GG$: this is immediate from the definition of a ranking
  structure, provided the correctness of the sub-procedures that
  identify anchors.

  In the other direction, we assume that
  $\tuple{\Phi_{f^1},\Phi_{f^2},\dots}$ is a ranking structure for
  $\GG$, and prove that the algorithm succeeds.
 
  Consider any iteration of the main loop, and let $\cal C$ be the SCC
  popped from the stack. We claim that there exists an MCNP IRF for
  $\cal C$: indeed, using the notation of Def.~\ref{def:structure},
  choose $g\in \mathcal C$ such that $i(g)$ is minimal. Then
  $\Phi_{f^{i(g)}}$ anchors $g$ for a subset of $\GG$ that contains
  $\cal C$. Our search procedure will find an MCNP IRF (though not
  necessarily the same), and will remove one or more anchors. Thus, at
  the completion of each iteration, a non-empty set of transition
  rules has been removed from $\mathcal C$. The contents of the stack
  are, therefore, a set of SCCs which are strictly reduced (with
  respect to the number of arcs) in each iteration, which proves that
  the algorithm terminates. It will not abort, as we have just argued
  that the search for a level mapping and anchors must succeed.
\end{proof}

\setcounter{theorem}{\value{anchor correct}}
\addtocounter{theorem}{-1}
\begin{lemma}
   Let $\GG$, $f$, $S$, and $B$ be as in Sect.\ \ref{subsec:scc-algorithm}.  Then,
   $g\in\GG$ is a MCNP-anchor  
  for $\GG$ w.r.t~$f$ if and only if: (1) $g\in S\cap B$; or 
  (2) there are node numberings $\nn_S$ and $\nn_B$ agreeing with $S$
  and $B$ respectively, such that ${\Delta\nn_S(g)\ne 0}$ and
  $\Delta\nn_B(g)\ne 0$.
\end{lemma}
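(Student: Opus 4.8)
The plan is to establish the two directions separately, relying on Theorem~\arabic{ranking thm} (the anchor criterion for level mappings) as the characterization of what it means for $g$ to be a MCNP-anchor, and on the node-numbering observation already made in the text just before the lemma. Recall that, by Theorem~\arabic{ranking thm}, $g$ is a MCNP-anchor for $\GG$ w.r.t.\ $f$ iff every cycle $\cal C$ of $\GG$ through $g$ contains a strictly-oriented transition (an element of $S$) and a bounded transition (an element of $B$). Since $f$ already orients all of $\GG$ by hypothesis, condition (1) of Theorem~\arabic{ranking thm} reduces to ``$\cal C$ meets $S$'' and condition (2) to ``$\cal C$ meets $B$''. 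So the lemma amounts to showing that this ``every cycle through $g$ meets $S$, and every cycle through $g$ meets $B$'' condition is equivalent to the disjunction (1)$\,\lor\,$(2) in the statement.

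First I would handle the ``if'' direction. If $g\in S\cap B$, then trivially every cycle through $g$ contains the element $g$ of $S$ and the element $g$ of $B$, so $g$ is an anchor. Otherwise, suppose node numberings $\nn_S,\nn_B$ agree with $S,B$ respectively and $\Delta\nn_S(g)\ne 0$, $\Delta\nn_B(g)\ne 0$. Take any cycle $\cal C$ through $g$. Around a cycle the increments $\Delta\nn_S$ of the edges sum to $0$ (telescoping), and since the $g$-edge contributes a nonzero term, some other edge $g'\in\cal C$ must have $\Delta\nn_S(g')>0$; because $\nn_S$ agrees with $S$, this forces $g'\in S$. The same argument with $\nn_B$ gives an edge of $\cal C$ in $B$. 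Hence every cycle through $g$ meets both $S$ and $B$, so $g$ is a MCNP-anchor.

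Next, the ``only if'' direction. Assume $g$ is a MCNP-anchor and $g\notin S\cap B$; I must produce the two numberings. I would argue for $\nn_S$ (the $\nn_B$ case is symmetric). The key structural fact: since every cycle of $\GG$ through $g$ meets $S$, the edge $g$ does not lie on any cycle of the graph $\GG\setminus S$ (the CFG with the $S$-arcs deleted), i.e.\ the endpoints of $g$ lie in distinct SCCs of $\GG\setminus S$, and moreover — orienting things — there is no path in $\GG\setminus S$ from the target of $g$ back to its source. Contract the SCCs of $\GG\setminus S$; the result is a DAG, so it admits a topological numbering of its nodes. Pull this back to a numbering $\nn_S$ of the program points: any edge of $\GG\setminus S$ goes from a lower-or-equal SCC to the same SCC, hence never increases $\nn_S$; so $\Delta\nn_S(e)>0$ implies $e\in S$, i.e.\ $\nn_S$ agrees with $S$. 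And since in the DAG the source of $g$ strictly precedes its target (there being no return path), we get $\Delta\nn_S(g)>0\ne 0$. Normalizing the numbering to land in $\{1,\dots,n\}$ (it is constant on SCCs, so $\le n$ distinct values, and we can compress) gives a legal node numbering in the sense of Def.~\ref{def:node numbering}. Repeating with $S$ replaced by $B$ gives $\nn_B$, establishing case (2).

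The main obstacle is the ``only if'' direction — specifically, turning the combinatorial ``every cycle through $g$ meets $S$'' into an actual integer numbering with $\Delta\nn_S(g)\ne 0$. The care points there are: (i) arguing correctly that deleting $S$ breaks all cycles through $g$ \emph{in a directed sense}, so that the quotient graph is genuinely a DAG and the source of $g$ sits strictly before its target (if there were a path back, we could splice it with $g$ to form a cycle avoiding $S$); and (ii) checking that the topological order, transported to program points and compressed to the range $\{1,\dots,n\}$, still satisfies the ``agreement'' implication of Def.~\ref{def:node numbering} for \emph{all} edges of $\GG$, not just those inside a single SCC. Both are routine once set up, but they are where the proof actually has content; the ``if'' direction is just the telescoping-sum observation already sketched in the running text.
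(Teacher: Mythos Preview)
Your approach is essentially the paper's: the ``if'' direction via the telescoping-sum observation on a cycle, and the ``only if'' direction by numbering the SCCs of $\GG\setminus S$ (resp.\ $\GG\setminus B$) in topological order and pulling the numbering back to program points. One minor orientation slip: for $\nn_S$ to agree with $S$ you need non-$S$ edges to satisfy $\Delta\nn_S\le 0$, which means you want \emph{reverse} topological order on the condensation (as the paper uses); with that convention your closing claim ``$\Delta\nn_S(g)>0$'' should be weakened to ``$\Delta\nn_S(g)\ne 0$''.

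More substantively, your splice argument for ``the endpoints of $g$ lie in distinct SCCs of $\GG\setminus S$'' breaks when $g\in S$: the cycle you build from a $q\to p$ path in $\GG\setminus S$ together with $g$ then meets $S$ at $g$ itself, so no contradiction arises. In that situation the endpoints of $g$ may well share an SCC of $\GG\setminus S$, forcing $\Delta\nn_S(g)=0$ for \emph{every} numbering agreeing with $S$. Concretely, take two points $p,q$ with arcs $g,k:p\to q$ and $h:q\to p$, $S=\{g\}$, $B=\{h,k\}$; then the unique cycle through $g$ is $\{g,h\}$, which meets both $S$ and $B$, so $g$ is an anchor, yet $g\notin S\cap B$ and any $\nn_S$ agreeing with $S$ must have $\nn_S(p)=\nn_S(q)$. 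The paper's proof has the same gap, hidden in the sentence ``Either $g$ itself is in $B$, or $g$ connects different SCCs; in either case, $\Delta\nn_B(g)\ne 0$'': belonging to $B$ does not by itself force $\Delta\nn_B(g)\ne 0$.
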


\begin{proof}
  Let $g = p(\bar x) \scgarrow \pi; q(\bar y)$.
If $p=q$, it is easy to see that $g$ is an anchor w.r.t.~$f$ if and
  only if $g\in S\cap B$. Case (2) is impossible if $p=q$. Next, let
  $p\ne q$.

  First, suppose that a node numbering as required does exist.  Now if
  $\mathcal{C}$ is a cycle including $g$, the $\nn_S$ values on this cycle are
  not all equal; so there must be a $g'=p'(\bar x') \scgarrow \pi';
  q'(\bar y') \in \mathcal{C}$ for which $\nn_S(p') > \nn_S(q')$.  Every
  transition rule with such numbering was required to be in $S$.  A
  similar argument shows that $\cal C$ must include a bounded transition
  rule.  Thus, $g$ satisfies the requirements in Thm.~\arabic{ranking
    thm}, justifying the ``if'' part of the lemma.
  
  For ``only if,'' suppose that $g$ is an anchor.  Let ${\GG}_B =
  {\GG}\setminus B$.  Assign numbers to the strongly-connected
  components of ${\GG}_B$ in reverse-topological order (recall that
  SCCs form an acyclic graph). So if components $\mathcal{C}_1,\mathcal{C}_2$ are
  connected by an arc from $\mathcal{C}_1$ to $\mathcal{C}_2$, then $\mathcal{C}_1$ has the larger
  number. For any program point in an SCC, let $\nn_B$ map it to the
  number assigned to this SCC.  Clearly, this numbering agrees with
  $B$; every transition rule $g$ such that $\Delta\nn_B(g)>0$ is not in
  ${\GG}_B$.  In a similar way we define $\nn_S(g)$.  Now, every cycle
  through $g$ includes an arc of $B$: this means that the end-points
  of $g$ are not in the same SCC of $\GG_B$.  Either $g$ itself is in
  $B$, or $g$ connects different SCCs; in either case, $\Delta\nn_B(g)
  \ne 0$.  Similarly, $\Delta\nn_S(g) \ne 0$.  The required conclusion
  is satisfied.
\end{proof}

\section{Summary of Experiments}
\label{app:experiments}
We provide here more information on the experimental results in Sect.\ \ref{sec:exp}.
For further details we refer to
\url{http://aprove.informatik.rwth-aachen.de/eval/MCNP}.

\subsection*{Benchmark Suite 1}

Table~\ref{t:suite1} gives the number of \emph{proofs}, the
\emph{average runtime}, and the \emph{maximum runtime} for our MCNP
implementation on the 81 examples from Suite 1.  Out of 81 MCSs of the
MC transition system, MCNP could show termination for 66 of
them.  The maximum runtime of 5.15 seconds was needed on the instance
\texttt{WTC/sipma91} consisting of 15 MC transition rules with up to
12 argument positions (source + target) and up to 60 individual order constraints in a
single monotonicity constraint.

\begin{table}[h]
\caption{Result Summary for Suite 1}
\label{t:suite1}
\begin{tabular}{cccc}
\hline\hline
Tool & Proofs & Avg.\ Time & Max.\ Time\\\hline
MCNP & 66/81   & 0.55 s & 5.15 s \\
\hline\hline
\end{tabular}
\end{table}

32 of the examples from Suite 1 originate from the
evaluation of the paper \cite{RTA09} with the
termination prover \aprove.
Table~\ref{t:suite1rta09} compares the results from
our experiments with MCNP to the experiments with \aprove.
Here the new column \emph{T/o (60 s)} denotes the number of
\emph{timeouts}, i.e.,
examples where the runs were aborted after exceeding a time
limit (here 60 seconds). The column \emph{Solved-only}
gives the number of examples that were solved by the tool in question,
but not by the other one (i.e., there was 1 example that was solved by
\aprove, but not by MCNP). Since in some of the runs timeouts
occurred, we mention two numbers for the average runtime:
\emph{Avg.\ Time (excl.\ t/o)} gives
the average runtime on the examples where the tool in question had no
timeouts, and \emph{Avg.\ Time (incl.\ t/o)} denotes the average
runtime on all examples in the example suite, where timeouts are
counted by the value of the time limit (i.e., here 60 seconds).

\begin{table}[h]
\caption{Result Summary for Suite 1
 on Instances from \protect \cite{RTA09}}
\label{t:suite1rta09}
\begin{tabular}{ccccccc}
\hline\hline
Tool & Proofs & Avg.\ Time & Avg.\ Time & Max.\ Time & T/o & Solved-only \\
 & & (excl.\ t/o) & (incl.\ t/o) & & (60 s) & \\
\hline
MCNP    & 27/32 & 0.22 s & \phantom{0}0.22 s & \phantom{$>$ 6}4.22 s & -- & 0 \\
\aprove & 28/32 & 5.30 s & 12.14 s           & $>$ 60.00 s           & 4  & 1 \\
\hline\hline
\end{tabular}
\end{table}

\subsection*{Benchmark Suite 2}

Table~\ref{t:suite2} compares the results of our experiments to those of
\costa when applied with a timeout of 120 seconds on the examples of Suite 2.  
The columns in this table
are the same as explained for Table~\ref{t:suite1rta09}.  From the 392
SCCs in the MC transition systems in this suite, MCNP could show
termination of 296 of them.  The maximum runtime for MCNP (16.31
seconds) was needed on the example \verb!Julia_10_Recursive/Test6!,
consisting of 36 MC transition rules with up to
16 argument positions and up to 51 individual order
constraints in a single monotonicity constraint.

\begin{table}[h]
\caption{Result Summary for Suite 2}
\label{t:suite2}
\begin{tabular}{ccccccc}
\hline\hline
Tool & Proofs & Avg.\ Time  & Avg.\ Time  & Max.\ Time & T/o & Solved-only \\
     &        & (excl. t/o) & (incl. t/o) &            & (120 s) & \\
\hline
MCNP    &  92/165 & 0.662 s & 0.662 s  & \phantom{$>$ 1}16.31 s & -- & \phantom{1}5 \\
\costa  & 102/165 & 0.076 s & 3.709 s  & $>$ 120.00 s           & 5  & 15 \\
\hline\hline
\end{tabular}
\end{table}

\subsection*{Benchmark Suite 3}

Table \ref{t:suite3fixabs} compares the results of our MCNP
implementation to those of a variant of \aprove\ where we 
fix the abstraction to be the same as in the
preprocessor for MCNP. Table \ref{t:suite3full} 
compares the results of MCNP to those of \aprove\ without fixing the
abstraction. The columns in these tables
are the same as explained for Table~\ref{t:suite1rta09}. 
The timeouts of MCNP on this suite may be due to
the increased complexity of the corresponding instances.
For example, \verb!Julia_10_Iterative/Infix2Postfix! consists of 319 MC
transition rules with up to 11 argument positions and up to 29 individual order
constraints in a single 
monotonicity constraint, and the example \verb!Julia_10_Iterative/Test9! has 56 MC
transition rules with up to 14 argument positions and up to 158 individual order
constraints in a single 
monotonicity constraint.

\begin{table}[h]
\caption{Result Summary for Suite 3 using \aprove\ with Fixed Abstraction}
\label{t:suite3fixabs}
\begin{tabular}{ccccccc}
\hline\hline
Tool & Proofs & Avg.\ Time  & Avg.\ Time  & Max.\ Time & T/o & Solved-only \\
     &        & (excl. t/o) & (incl. t/o) &            & (60 s) & \\
\hline
MCNP         &  63/127 & \phantom{1}2.12 s & \phantom{1}5.76 s & $>$ 60 s & 8 & \phantom{1}5 \\
\aprove\ fix &  73/127 & 11.08 s           & 14.16 s           & $>$ 60 s & 8 & 15 \\
\hline\hline
\end{tabular}
\end{table}

\begin{table}[h]
\caption{Result Summary for Suite 3 using Full \aprove}
\label{t:suite3full}
\begin{tabular}{ccccccc}
\hline\hline
Tool & Proofs & Avg.\ Time  & Avg.\ Time  & Max.\ Time & T/o & Solved-only \\
     &        & (excl. t/o) & (incl. t/o) &            & (60 s) & \\
\hline
MCNP             &  63/127 & 2.12 s & \phantom{1}5.76 s & $>$ 60 s & \phantom{1}8 &   \phantom{3}3 \\
\aprove          &  95/127 & 9.58 s & 17.12 s           & $>$ 60 s & 19 &  35 \\
\hline\hline
\end{tabular}
\end{table}

When executing MCNP with no timeout, one could show termination of 64
examples with MCNP (the proof for the additional example
\verb!Julia_10_Iterative/Test9! needs 190.6 seconds), and MCNP can
show termination of 74 of the 181 SCCs in the MCSs of this suite. 
MCNP's highest runtime is
obtained on the example \verb!Aprove_09/SortCount! with 971.7 seconds,
and it is worth noting that this example consists of 50 MC transition
rules with up to 212 individual order constraints in a single
monotonicity constraint.

\end{document}